\def\ps@headings{%
\def\@oddhead{\mbox{}\scriptsize\rightmark \hfil \thepage}%
\def\@evenhead{\scriptsize\thepage \hfil \leftmark\mbox{}}%
\def\@oddfoot{}%
\def\@evenfoot{}}
\makeatother \pagestyle{headings}
\newtheorem{theorem}{Theorem}
\newtheorem{remark}{Remark}
\newtheorem{lemma}{Lemma}
\newtheorem{corollary}{Corollary}
\newtheorem{assumption}{Assumption}
\begin{document}

\title{Fundamental Limits of Communication Efficiency for Model Aggregation in Distributed Learning: A Rate-Distortion Approach}

\author{Naifu Zhang, Meixia Tao, Jia Wang and Fan Xu
\thanks{N. Zhang, M. Tao, J. Wang and F. Xu are with the Department of Electronic Engineering, Shanghai Jiao Tong University, Shanghai,
200240, P. R. China. (email: arthaslery@sjtu.edu.cn; mxtao@sjtu.edu.cn; jiawang@sjtu.edu.cn; xxiaof@sjtu.edu.cn.). Part of this work is presented at IEEE ISIT 2021 \cite{zhang2021sumratedistortion}.}}

\maketitle
\vspace{-1.5cm}
\begin{abstract}
One of the main focuses in distributed learning is communication efficiency, since model aggregation at each round of training can consist of millions to billions of parameters.
Several model compression methods, such as gradient quantization and sparsification, have been proposed to improve the communication efficiency of model aggregation.
However, the information-theoretic minimum communication cost for a given distortion of gradient estimators is still unknown.
In this paper, we study the fundamental limit of communication cost of model aggregation in distributed learning from a rate-distortion perspective.
By formulating the model aggregation as a vector Gaussian CEO problem, we derive the rate region bound and sum-rate-distortion function for the model aggregation problem, which reveals the minimum communication rate at a particular gradient distortion upper bound.
We also analyze the communication cost at each iteration and total communication cost based on the sum-rate-distortion function with the gradient statistics of real-world datasets.
It is found that the communication gain by exploiting the correlation between worker nodes is significant for SignSGD, and a high distortion of gradient estimator can achieve low total communication cost in gradient compression.
\end{abstract}

\begin{IEEEkeywords}
Distributed learning, model aggregation, rate-distortion theory, vector Gaussian CEO problem.
\end{IEEEkeywords}

\section{Introduction}
In a wide range of artificial intelligence (AI) applications such as image recognition and natural language processing, the size of training datasets has grown significantly over the years due to the growing computation and sensing capabilities of mobile devices.
It is becoming crucial to train big machine learning models in a distributed fashion in which large-scale datasets are distributed over multiple worker machines for parallel processing.
Distributed learning offers several distinct advantages compared with traditional learning at a centralized data center, such as preserving privacy, reducing network congestion, and leveraging distributed on-device computation.
In a typical distributed learning framework such as federated learning \cite{mcmahan2016communication,konen2016federated,bonawitz2019towards,yang2019federated}, each worker node downloads a global model from the parameter server and computes an update by learning from its local dataset using, for instance, the stochastic gradient descent (SGD) algorithm.
These local updates are then sent to the parameter server and to improve the global model.
The process is repeated until the algorithm converges.

The main bottleneck in distributed learning is the communication cost for model aggregation \cite{dean2012large,seide20141,strom2015scalable} since the model update from each worker node at each round of training may consist of millions to billions of parameters.
This is exacerbated in the cases of the federated learning paradigm \cite{mcmahan2016communication,konen2016federated} and the cloud-edge AI systems \cite{8736011}, where the wireless channel bandwidth between edge server and edge devices is limited.
The total communication cost in distributed learning depends on the communication cost at each iteration and the number of iterations.
To reduce the communication cost at each iteration, one conventional method is to let each worker node compress its local update, e.g., stochastic gradient, via quantization and sparsification before aggregation.
Partially initiated by the 1-bit implementation of SGD by Microsoft in \cite{seide20141}, a large number of recent studies have revisited the idea of gradient quantization \cite{pmlr-v80-bernstein18a,alistarh2018qsgd,NIPS2017_89fcd07f}.
Other approaches for low-precision training focus on the sparsification of gradients, either by thresholding small entries or by random sampling \cite{aji2017sparse,lin2018deep}.
There also exist several approaches that combine quantization and sparsification to maximize performance gains, including Quantized SGD (QSGD)\cite{alistarh2018qsgd} and TernGrad\cite{NIPS2017_89fcd07f}.
By doing so, the distortion of the gradient received at the parameter server increases, which results in a lower convergence rate \cite{bubeck2015convex,wang2018cooperative,zhang2015deep}.
Hence, the training phase requires more number of iterations to maintain the target model accuracy.
These methods \cite{seide20141,pmlr-v80-bernstein18a,alistarh2018qsgd,NIPS2017_89fcd07f,aji2017sparse,lin2018deep} aim to reconstruct the model update computed from each of worker nodes.
However, by recalling that the main objective in distributed learning is a good estimate of the global model update at the parameter server by using the local model updates from worker nodes.
This implies that the exact recovery of each individual local model update in existing works \cite{seide20141,pmlr-v80-bernstein18a,alistarh2018qsgd,NIPS2017_89fcd07f,aji2017sparse,lin2018deep} is not necessary.
Specifically, in distributed learning, the objective is to estimate the global model update computed by gradient descent (GD) on a global dataset, while the local model update computed by each worker node is a noisy version of the global model update.
Moreover, the local model updates are highly correlated among different worker nodes, providing an opportunity for distributed source coding to reduce the communication cost in model aggregation.
However, this is not leveraged by these works \cite{seide20141,pmlr-v80-bernstein18a,alistarh2018qsgd,NIPS2017_89fcd07f,aji2017sparse,lin2018deep}.

As such, a fundamental question arises: what is the information-theoretic minimum communication cost required to achieve a target convergence bound (i.e., model accuracy) in distributed learning?
Considering that the total communication cost in distributed learning is determined by the number of learning iterations and the communication cost at each iteration, this question can be decoupled into two sub-questions:
1) For a given per-iteration distortion bound on gradient estimation, what is the required number of total iterations to achieve a target convergence bound?
2) what is the minimum communication rate at each iteration to achieve a target gradient distortion bound?
Then, by combining the solutions of the sub-questions and searching all possible distortion bounds on the gradient, we can find the global minimum communication cost to achieve a target convergence bound.
The former question has been solved by \cite{bubeck2015convex}, which provides closed-form convergence bounds of SGD algorithms.
The latter question essentially falls into the classic rate-distortion problem for lossy source coding.

The model aggregation in distributed learning with SGD algorithms is inherently a multiterminal remote source coding problem, or the so-called chief executive officer (CEO) problem \cite{berger1996ceo}, and the justification will be presented in Section III in detail.
In the CEO problem, the main objective of the CEO is to make a good estimate of a data sequence $\{X(t)\}_{t=1}^\infty$ that cannot be observed directly.
The CEO employs a team of $K$ agents who can observe the independently corrupted version of the data sequence, denoted as $\{Y_k(t)\}_{t=1}^\infty$, where $k=1,2,...,K$.
The observations $\{Y_k(t)\}_{t=1}^\infty$, for $k=1,2,...,K$, are separately encoded and forwarded to the CEO over rate-constrained channels.
The CEO then decodes the data sequence $\{X(t)\}_{t=1}^\infty$ from the received information from $K$ agents.
The quadratic Gaussian CEO problem, where the data sequence is assumed to be Gaussian distributed, first studied by Viswanathan and Berger \cite{viswanathan1997quadratic}, has received particular attention.
Oohama in \cite{669162} made major progress on the quadratic Gaussian CEO problem and showed that the sum-rate of the Berger-Tung achievable region \cite{10016434852} is tight.
This problem is further studied in \cite{oohama2005rate,1365154,5508637}, where the entire rate-distortion region is established in \cite{1365154}.
The achievability is obtained by using the Berger-Tung inner bound \cite{10016434852} while the converse, first developed in \cite{669162} and later refined in \cite{oohama2005rate,1365154}, is obtained by the application of Shannon's entropy power inequality to related various information quantities.

However, note that none of the existing rate region results for the CEO problem can be directly applied to distributed learning for the following reason.
The distortion measure in the original CEO problem is only mean square error (MSE) without any consideration on estimation bias.
On the other hand, the convergence rate of distributed learning depends on both the bias and variance of the gradient estimator \cite{ajalloeian2020convergence}.
This implies that MSE alone is not enough to measure the distortion of gradient estimation in distributed learning for convergence analysis.
As a result, the existing rate-distortion results for the CEO problem without bias constraint are not directly applicable.
For an example, in the classic rate-distortion function \cite{669162}, if the MSE $D$ is larger than the variance $\sigma_X^2$ of the Gaussian data sequence $\{X(t)\}^\infty_{t=1}$, the minimum achievable rate is zero, meaning that the agents do not transmit anything.
This is valid in the CEO problem from the pure estimation perspective, but obviously trivial in distributed learning as the model will not converge.
Based on the discussion above, this paper imposes the unbiased estimation constraint on the gradient estimator and re-visit the rate-distortion function.
By having the unbiased estimation constraint, the MSE-based distortion can then fully determine the convergence rate.
In addition, given a variance bound, the unbiased estimator achieves the maximum convergence rate\cite{ajalloeian2020convergence}.

Motivated by the above issue, in this paper, we study the fundamental limit of communication cost of model aggregation in distributed learning to achieve a target convergence bound from a rate-distortion perspective.
The primary goal of this paper is to provide an information-theoretic characterization of the communication cost at a given bounded variance and to have an insight into communication efficiency for distributed learning.
To the best of our knowledge, this is the first work aimed at analysing the rate-distortion function for model aggregation in distributed learning.
The main contributions of this work are outlined below:
\begin{itemize}
\item\emph{Problem formulation and justification:}
We formulate the model aggregation as a vector Gaussian CEO problem based on the gradient distribution.
Specifically, we model the global and local gradients as independent Gaussian vector sequences with zero mean.
We impose an unbiased constraint on the gradient estimator such that the MSE distortion can fully determine the convergence rate.
We also justify the assumptions of gradient distribution through experiments on real-world datasets.

\item\emph{Rate-distortion analysis:}
Based on the above problem formulation, we derive the optimal rate region at given distortion for the model aggregation problem.
For the achievability, we first design an unbiased estimator at the receiver and then adopt the Berger-Tung scheme \cite{10016434852}.
For the converse, we tighten the traditional converse bound \cite{1365154} by linear minimum mean squared error and Jensen's inequality.
It is proved that the achievable bound coincides with the converse bound, thus achieves optimality.
We provide an explicit formula for the rate region boundary and a closed-form sum-rate-distortion function by solving convex optimization problems.
Moreover, the formulated optimization problems can also obtain the optimal rate allocation in practical applications by simply replacing the coefficients.

\item\emph{Communication cost analysis:}
We analyze the communication cost at each iteration based on the gradient statistics of real-world datasets.
Numerical results demonstrate that the communication gain from exploiting the correlation between the local gradients of worker nodes is significant for SignSGD \cite{pmlr-v80-bernstein18a}.
We also analyze the total communication cost required to achieve a target convergence bound and guide the distortion selection.
It is found that a moderately high distortion of the gradient estimator achieves the minimum total communication cost.
\end{itemize}

The rest of this paper is organized as follows.
In Section II, we introduce the basics of distributed learning and its convergence rate.
In Section III, we formulate the model aggregation from the rate-distortion perspective.
In Section IV, we provide the optimal rate region result and its explicit formula.
Section V analyzes the communication cost based on real-world datasets.
We conclude the paper in Section VII.

\section{Basics of Distributed Learning}
In this section, we first introduce the basic distributed learning setup in error-free communication and then present its convergence bound as a function of the number of iterations and the variance bound of the gradient estimator.

\subsection{System Model}
\begin{figure}[t]
\begin{centering}
\vspace{-0.2cm}
\includegraphics[scale=.40]{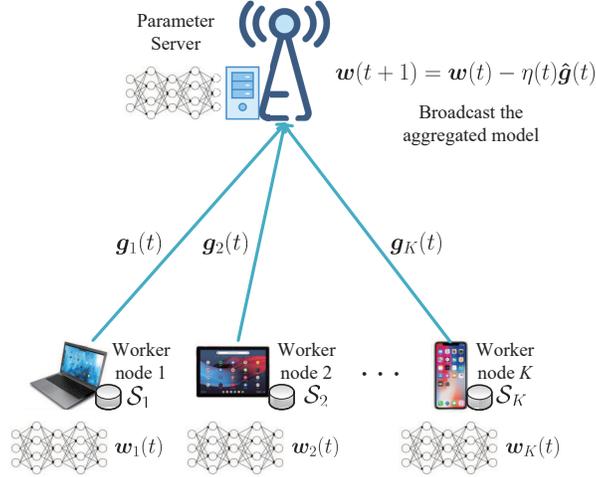}
\vspace{-0.1cm}
 \caption{\small{Illustration of a basic distributed learning system.}}\label{fig:system}
\end{centering}
\vspace{-0.3cm}
\end{figure}
We consider a basic distributed learning framework as illustrated in Fig.~\ref{fig:system}, where a global AI model (e.g., a neural network) is trained collaboratively across $K$ worker nodes via the coordination of a parameter server.
Let $\mathcal{K}=\{1,...,K\}$ denote the set of worker nodes.
Each worker node $k\in\mathcal{K}$ collects a fraction of labelled training data via interaction with its own users, constituting a local dataset, denoted as $\mathcal{S}_k$.
Let $\bm{w}\in\mathbb{R}^P$ denote the $P$-dimensional model parameter to be learned.
The loss function measuring the model error is defined as
\begin{equation}
F(\bm{w})=\sum_{k\in\mathcal{K}}{\frac{|\mathcal{S}_k|}{|\mathcal{S}|}F_k(\bm{w})},
\label{equ:loss_function}
\end{equation}
where $F_k(\bm{w})=\frac{1}{|\mathcal{S}_k|}\sum_{i\in\mathcal{S}_k}{f_i(\bm{w})}$ is the loss function of worker node $k$ quantifying the prediction error of the model $\bm{w}$ on the local dataset $\mathcal{S}_k$, with $f_i(\bm{w})$ being the sample-wise loss function, and $\mathcal{S}=\bigcup_{k\in\mathcal{K}}{\mathcal{S}_k}$ is the union of all datasets.
The minimization of $F(\bm{w})$ is typically carried out through the mini-batch SGD algorithm, where worker node $k$'s local dataset $\mathcal{S}_k$ is split into mini-batches of size $B_k$ and at each iteration $t=1,2,...,T$, we draw one mini-batch $\mathcal{B}_k(t)$ randomly and calculate the local gradient vector as
\begin{equation}
\bm{g}_k(t)=\nabla\frac{1}{B_k}\sum_{i\in\mathcal{B}_k(t)}{f_i(\bm{w}(t))}.
\label{equ:mini_batch_SGD}
\end{equation}
In the special case if the mini-batch size is $B_k = 1$, the algorithm reduces to the SGD algorithm.
Let $\bm{g}(t)\triangleq\nabla F(\bm{w}(t))\in\mathbb{R}^P$ denote the global gradient vector as if GD algorithm can be applied over the global dataset $\mathcal{S}$.
Given the global dataset $\mathcal{S}$ and the model parameter $\bm{w}$, $\bm{g}(t)$ is deterministic at each iteration $t$ but unknown to the parameter server.
By the definition of the mini-batch SGD algorithm, $\bm{g}_k(t)$ is a random vector satisfying $\mathbb{E}[\bm{g}_k(t)]=\bm{g}(t)$, where the expectation is over mini-batches.
Let $\bm{\tilde{g}}(t)$ denote the stochastic gradient obtained by SGD algorithm with $B_k=1$ and assume that its variance with respect to the global gradient $\bm{g}(t)$ at iteration $t$ is $\mathbb{E}[\|\bm{\tilde{g}}(t)-\bm{g}(t)\|^2]=\sigma^2(t)$.
By definition (\ref{equ:mini_batch_SGD}), the local gradient $\bm{g}_k(t)$ has variance $\frac{\sigma^2(t)}{B_k}$.
If all the local gradients $\{\bm{g}_k(t)\}_{k=1}^K$ are received by the parameter server through error-free transmission, the optimal estimator of $\bm{g}(t)$ to minimize its variance should be the Sample Mean Estimator, i.e., $\bm{\hat{g}}(t)=\sum_{k=1}^K{\frac{B_k}{B}\bm{g}_k(t)}$, where $B\triangleq\sum_{k=1}^K{B_k}$ is the global batch size.
Note that the estimator $\bm{\hat{g}}(t)$ is equivalent to a mini-batched stochastic gradient of batch size $B$, hence it is an unbiased estimation of $\bm{g}(t)$ with mean $\mathbb{E}[\bm{\hat{g}}(t)]=\bm{g}(t)$ and variance $\frac{\sigma^2(t)}{B}$.
Then the parameter server updates the model parameter as
\begin{equation}
\bm{w}(t+1)=\bm{w}(t)-\eta(t)\bm{\hat{g}}(t),
\label{equ:centralized_update}
\end{equation}
with $\eta(t)$ being the learning rate at iteration $t$.

\subsection{Convergence Rate}
Given access to the unbiased gradient estimator $\bm{\hat{g}}(t)$ with variance bound $D$, and the number of iterations $T$, the following theorem provides a convergence bound of distributed learning:
\begin{theorem}[\cite{bubeck2015convex}, Theorem 6.3]
\label{theorem:convergence_rate}
Let $\omega\subseteq\mathbb{R}^P$ be a convex set, and let $F:\omega\rightarrow\mathbb{R}$ be convex and $\beta$-smooth.
Let $\bm{w}(0)$ be given, and let $A^2=\sup_{\bm{w}\in\omega}{\|\bm{w}-\bm{w}(0)\|^2}<\infty$.
Let $T>0$ be fixed.
Given repeated, independent access to the unbiased gradient estimators with variance bound $D$ for loss function $F$, i.e., $\mathbb{E}[\|\bm{\hat{g}}(t)-\bm{g}(t)\|^2]\leq D$ for all $t=1,2,...,T$, initial point $\bm{w}(0)$ and constant step sizes $\eta(t)=\frac{\gamma}{\beta+1}$, where $\gamma=A\sqrt{\frac{2}{DT}}$, the convergence bound of the distributed learning satisfies

\begin{align}
\mathbb{E}\left[F\left(\frac{1}{T}\sum_{t=1}^T{\bm{w}(t)}\right)\right]-\min_{\bm{w}\in\mathbb{R}^P}{F(\bm{w})}\leq A\sqrt{\frac{2D}{T}}+\frac{\beta A^2}{T}.\label{equ:convergence_rate}
\end{align}
\end{theorem}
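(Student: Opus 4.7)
The plan is to follow the classical SGD analysis for smooth convex objectives, tracking the squared distance $\|\bm{w}(t)-\bm{w}^*\|^2$ to a minimizer $\bm{w}^* \in \arg\min_{\bm{w}\in\omega} F(\bm{w})$. Expanding the update $\bm{w}(t+1) = \bm{w}(t) - \eta\bm{\hat{g}}(t)$ and taking conditional expectation given $\bm{w}(t)$, the cross term collapses by the unbiasedness $\mathbb{E}[\bm{\hat{g}}(t)\mid\bm{w}(t)] = \bm{g}(t)$, yielding
\begin{equation*}
\mathbb{E}\bigl[\|\bm{w}(t+1)-\bm{w}^*\|^2\mid\bm{w}(t)\bigr] = \|\bm{w}(t)-\bm{w}^*\|^2 - 2\eta\langle\bm{g}(t),\bm{w}(t)-\bm{w}^*\rangle + \eta^2\,\mathbb{E}\bigl[\|\bm{\hat{g}}(t)\|^2\mid\bm{w}(t)\bigr].
\end{equation*}
Convexity of $F$ then gives $\langle\bm{g}(t),\bm{w}(t)-\bm{w}^*\rangle \geq F(\bm{w}(t))-F(\bm{w}^*)$, while the bias-variance split together with the assumed variance bound $D$ yields $\mathbb{E}[\|\bm{\hat{g}}(t)\|^2\mid\bm{w}(t)] \leq \|\bm{g}(t)\|^2 + D$.

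The crucial ingredient is the \emph{self-bounding} property of smooth convex functions: since $\bm{w}^*$ minimizes $F$ and $F$ is $\beta$-smooth, plugging $\bm{w}(t)-\beta^{-1}\bm{g}(t)$ into the smoothness inequality and comparing with $F(\bm{w}^*)$ shows $\|\bm{g}(t)\|^2 \leq 2\beta(F(\bm{w}(t))-F(\bm{w}^*))$. This is where the $\beta$-smoothness hypothesis does real work, since it lets us control $\|\bm{g}(t)\|^2$ by the optimality gap itself without requiring a bounded-gradient assumption. Combining the three reductions turns the one-step identity into
\begin{equation*}
2\eta(1-\beta\eta)\,\mathbb{E}\bigl[F(\bm{w}(t))-F(\bm{w}^*)\bigr] \leq \mathbb{E}\|\bm{w}(t)-\bm{w}^*\|^2 - \mathbb{E}\|\bm{w}(t+1)-\bm{w}^*\|^2 + \eta^2 D,
\end{equation*}
which I would insist on only for $\eta<1/\beta$ so that the left coefficient is positive.

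Telescoping this recursion over $t=1,\ldots,T$, using $\|\bm{w}(0)-\bm{w}^*\|^2\leq A^2$, dropping the nonnegative terminal distance, and applying Jensen's inequality to the time-averaged iterate $\tfrac{1}{T}\sum_t \bm{w}(t)$ yields a bound of the shape $\tfrac{A^2}{2\eta T(1-\beta\eta)} + \tfrac{\eta D}{2(1-\beta\eta)}$ on the left-hand side of (\ref{equ:convergence_rate}). Substituting the prescribed step size $\eta=\gamma/(\beta+1)$ with $\gamma=A\sqrt{2/(DT)}$ is then designed to balance the two summands: the $\eta D$ piece reproduces the $A\sqrt{2D/T}$ term and the $A^2/(\eta T)$ piece together with the $(1-\beta\eta)^{-1}$ factor reproduces the $\beta A^2/T$ term. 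The main obstacle I anticipate is purely bookkeeping: tracking the $(1-\beta\eta)^{-1}$ factor through the final substitution and matching the stated constants exactly, especially given that the step-size expression $\gamma/(\beta+1)$ mixes a smoothness constant with a pure number and likely represents a slight reparameterization (e.g.\ $\eta=\gamma/(\beta\gamma+1)$) implicit in the reference. Once the constants are aligned, the additive structure of the bound is an immediate consequence of the additive way in which deterministic curvature and stochastic variance enter the one-step recursion above.
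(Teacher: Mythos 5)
The paper does not actually supply a proof of Theorem~\ref{theorem:convergence_rate}; the statement is imported verbatim (up to renaming $R\to A$, $\sigma^2\to D$, $t\to T$) from the cited reference, so the comparison here is against what is needed to make the stated constants close up. Your skeleton --- expand $\|\bm{w}(t+1)-\bm{w}^*\|^2$, let unbiasedness kill the cross term, control $\mathbb{E}\|\bm{\hat g}(t)\|^2$ by the variance bound plus smoothness, telescope, then apply Jensen --- is the right one, and you are also right that the printed step size $\eta=\gamma/(\beta+1)$ must be a typo for $\eta=\gamma/(\beta\gamma+1)=1/(\beta+1/\gamma)$, the form in Bubeck; only the latter guarantees $\eta<1/\beta$ unconditionally.

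The genuine gap is in how you spend the smoothness hypothesis. Using plain convexity and then the self-bounding inequality $\|\bm{g}(t)\|^2\leq 2\beta\bigl(F(\bm{w}(t))-F(\bm{w}^*)\bigr)$ as two separate steps leaves the coefficient $2\eta(1-\beta\eta)$ in front of the suboptimality, and after telescoping with $\eta=\gamma/(\beta\gamma+1)$, $\gamma=A\sqrt{2/(DT)}$, your bound evaluates to
\[
\frac{A^2(\beta\gamma+1)^2}{2\gamma T}+\frac{\gamma D}{2}=\left[\frac{(1+\beta\gamma)^2}{4}+\frac{1}{2}\right]A\sqrt{\frac{2D}{T}},
\]
while the target $A\sqrt{2D/T}+\beta A^2/T$ equals $\bigl[1+\tfrac{1}{2}\beta\gamma\bigr]A\sqrt{2D/T}$. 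The former strictly exceeds the latter whenever $\beta\gamma>1$, i.e.\ whenever $T<2\beta^2A^2/D$, so as written the route recovers the theorem only for large enough $T$. The repair is to replace the two separate steps by the single coupled convexity--smoothness inequality $\langle\bm{g}(t),\bm{w}(t)-\bm{w}^*\rangle\geq F(\bm{w}(t))-F(\bm{w}^*)+\tfrac{1}{2\beta}\|\bm{g}(t)\|^2$ in the cross term. Then the $\|\bm{g}(t)\|^2$ contribution appears with the nonnegative coefficient $\eta\bigl(\tfrac{1}{\beta}-\eta\bigr)$ and can be discarded for $\eta\leq 1/\beta$, so the one-step inequality becomes $2\eta\bigl(F(\bm{w}(t))-F(\bm{w}^*)\bigr)\leq \|\bm{w}(t)-\bm{w}^*\|^2-\mathbb{E}\|\bm{w}(t+1)-\bm{w}^*\|^2+\eta^2D$ with no $(1-\beta\eta)$ factor at all. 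Telescoping and Jensen then yield $\tfrac{A^2}{2\eta T}+\tfrac{\eta D}{2}\leq\tfrac{\beta A^2}{2T}+\tfrac{3A}{4}\sqrt{2D/T}$, which is dominated by the stated bound for every $T$, $\beta$, $D$. A secondary bookkeeping point worth fixing: your telescope most naturally bounds the average of $F(\bm{w}(t))$ over $t=0,\dots,T-1$, whereas the theorem averages $\bm{w}(1),\dots,\bm{w}(T)$; the reference handles this with a shift of index that your sketch leaves implicit.
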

\begin{remark}
By Theorem \ref{theorem:convergence_rate}, we can directly obtain the required number of iterations $T$ to achieve the target convergence bound for a given variance bound $D$ of the gradient estimator.
Therefore, in order to find out the fundamental limits of communication efficiency for model aggregation in distributed learning, the remaining problem is to characterize the minimum communication cost at each iteration for a given $D$, which is a basic problem in rate-distortion theory and will be solved in the following sections.
Note that Theorem \ref{theorem:convergence_rate} is for the unbiased gradient estimator.
The convergence results of biased SGD methods (i.e., gradient sparsification) are analyzed in existing works, e.g., \cite{ajalloeian2020convergence}, but the model generally converges to a neighborhood of the optimal solution.
In this work, we focus on the unbiased gradient estimator.
\end{remark}

\section{Problem Formulation}
In this section, we formulate the model aggregation from a rate-distortion perspective and justify the assumptions of gradient distribution used in the formulated problem through experiments on real-world datasets.

\subsection{Rate-Distortion Problem for Model Aggregation}
We consider the model aggregation for a general $L$-layer deep neural network (DNN) model at each iteration $t$.
To facilitate the following derivation, we omit the iteration index and assume that the dimension size of each layer of the model is identical.
Note that all the local gradients $\{\bm{g}_k\}$ and the global gradient $\bm{g}$ in distributed learning are unknown to the parameter server, thus each of them is viewed as a realization of a random source with certain distribution.
As stated in \cite{NIPS2017_89fcd07f}, the value range of gradients in different layers can be different as gradients are back propagated.
This makes it inaccurate to treat the entire gradient vector $\bm{g}_k(t)$ in one iteration as one sequence of i.i.d random variables with length $P$.
To this end, we segment each gradient vector at each iteration into a sequence of random vectors, where the dimension of each vector corresponds to the number of layers of the DNN, $L$, and the length of the sequence is determined by the number of neurons in each DNN layer, given by $P/L$.
The $P/L$ vectors in this sequence can be assumed independent and identically distributed and jointly encoded in practice.

Based on the gradient distribution justification in the next subsection, we define the global gradient as an independent Gaussian vector sequence $\{\bm{X}(i)\}_{i=1}^n$ with mean $0$ and covariance matrix $\mbox{diag}(\sigma_{X_1}^2,\sigma_{X_2}^2,...,\sigma_{X_L}^2)$, where $\sigma_{X_l}^2$ is the variance of the $l$-th layer of the global gradient, $n$ is the sequence length and $L$ is the number of layers.
Each $\bm{X}(i),i=1,2,...,n$ takes value in real space $\mathbb{R}^L$.
Note that $\bm{g}\in\mathbb{R}^P$ in Section II can be viewed as a realization of $\{\bm{X}(i)\}_{i=1}^n$ with $P=nL$.
Similarly, we define each local gradient as an independent Gaussian vector sequence $\{\bm{Y}_k(i)\}_{i=1}^n$, for $k=1,2,...,K$, which is a noisy version of $\{\bm{X}(i)\}_{i=1}^n$, each taking value in real space $\mathbb{R}^L$ and corrupted by independent additive white Gaussian noise, i.e.,
\begin{align}
\bm{Y}_k(i)=\bm{X}(i)+\bm{N}_k(i),i=1,2,...,n,
\end{align}
where $\bm{N}_k(i)$ is a Gaussian random vector with mean 0 and covariance matrix $\mbox{diag}(\sigma_{N_{k,1}}^2,...,\sigma_{N_{k,L}}^2)$.
We assume that $\bm{N}_k$ is independent over worker node $k$.
Note that, for $k=1,2,...,K$, $\bm{g}_k\in\mathbb{R}^P$ in Section II can be viewed as a realization of $\{\bm{Y}_k(i)\}_{i=1}^n$
In the following, we write $n$ independent copies of $\bm{X}$ and $\bm{Y}_k$, i.e., $\{\bm{X}(i)\}_{i=1}^n$ and $\{\bm{Y}_k(i)\}_{i=1}^n$, as $\bm{X}^n$ and $\bm{Y}_k^n$, respectively.

\begin{figure}[t]
\begin{centering}
\vspace{-0.2cm}
\includegraphics[scale=.40]{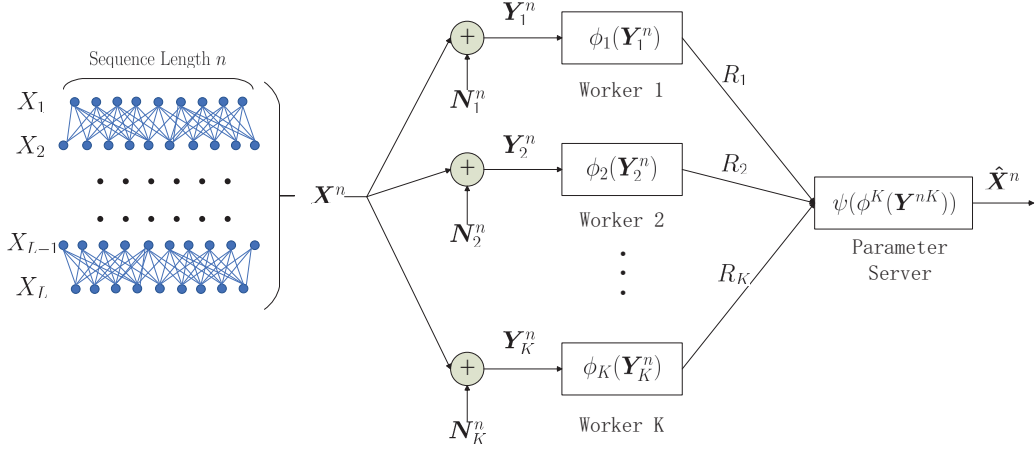}
\vspace{-0.1cm}
 \caption{\small{Model aggregation formulation from rate-distortion theory perspective.}}\label{fig:CEO}
\end{centering}
\vspace{-0.3cm}
\end{figure}
The proposed problem formulation for model aggregation is shown in Fig.~\ref{fig:CEO}.
The parameter server is interested in the sequence $\bm{X}^n$ (global gradient) that cannot be observed directly.
The parameter server employs $K$ worker nodes, and each worker node $k$, for $k=1,2,...,K$, observes independently corrupted version $\bm{Y}_k^n$ (local gradient) of $\bm{X}^n$.
Each local gradient $\bm{Y}_k^n$ is separately encoded to $\phi_k(\bm{Y}_k^n)$ and sent to the parameter server.
Upon receiving $\phi_k(\bm{Y}_k^n)$, for $k=1,2,...,K$, the parameter server outputs the estimation of $\bm{X}^n$ by using the decoder function $\hat{\bm{X}}^n=\psi(\phi_1(\bm{Y}_1^n),\phi_2(\bm{Y}_2^n),...,\phi_K(\bm{Y}_K^n))$.
The encoder function $\phi_k$ of each worker node $k$, for $k=1,2,...,K$, is defined by
\begin{align}
\phi_k:\mathbb{R}^{L\times n}\rightarrow\mathcal{C}_k\triangleq\{1,2,...,2^{nR_k}\},
\label{equ:rate_constrain}
\end{align}
where $R_k$ is the communication rate of worker node $k$.
The decoder function $\psi$ of the parameter server is defined by
\begin{align}
\psi:\mathcal{C}_1\times\mathcal{C}_2\times...\times\mathcal{C}_K\rightarrow\mathbb{R}^{L\times n}.
\end{align}
We define a $K$-tuple encoder function of all the $K$ encoder functions as
\begin{align}
\phi^K=(\phi_1,\phi_2,...,\phi_K).
\end{align}
Similarly, we define
\begin{align}
\phi^K(\bm{Y}^{nK})=(\phi_1(\bm{Y}_1^n),\phi_2(\bm{Y}_2^n),...,\phi_K(\bm{Y}_K^n)),
\end{align}
where $\bm{Y}^{nK}$ is the collection of the local gradients computed by all the worker nodes.
To apply Theorem \ref{theorem:convergence_rate} for convergence analysis, we impose the following unbiased gradient estimator:
\begin{align}
\mathbb{E}[\hat{\bm{X}}^n|\bm{X}^n=\bm{x}^n]=\bm{x}^n,
\label{equ:unbiased_definition}
\end{align}
and for $\hat{\bm{X}}^n=\psi(\phi^K(\bm{Y}^{nK}))$, define the MSE distortion by
\begin{align}
D(\bm{X}^n,\hat{\bm{X}}^n)=\frac{1}{n}\sum_{i=1}^n{\mathbb{E}[\|\bm{X}(i)-\hat{\bm{X}}(i)\|^2]}.
\label{equ:distortion}
\end{align}
The distortion (\ref{equ:distortion}) is also the variance of the unbiased gradient estimator $\hat{\bm{X}}^n$ and it can fully determine the convergence rate by Theorem \ref{theorem:convergence_rate}.
Note that in traditional rate-distortion problems, the only target is to minimize the distortion measured by MSE (\ref{equ:distortion}) without the constraint (\ref{equ:unbiased_definition}).
Given the MSE distortion, the gradient estimator has uncertain bias and variance, hence the MSE distortion alone cannot fully determine the convergence rate.

For a given distortion $D$, a rate $K$-tuple $(R_1,R_2,...,R_K)$ is said to be achievable if there are encoders $\phi^K$ satisfying (\ref{equ:rate_constrain}) and a decoder $\psi$ such that $\bm{\hat{X}}^n$ is unbiased (\ref{equ:unbiased_definition}) and $D(\bm{X}^n,\hat{\bm{X}}^n)\leq D$ for some $n$.
The closure of the set of all achievable rate $K$-tuples is called the rate region and we denote it by $\mathcal{R}_\star(D)\subseteq\mathbb{R}_+^K$.
To find the information-theoretic minimum communication cost at each iteration for a given distortion bound is to characterize the region $\mathcal{R}_\star(D)$.
Note that the model aggregation problem is similar to the vector Gaussian CEO problem.
However, the rate region results of the vector Gaussian CEO problems cannot be applied directly to model aggregation because they do not consider the unbiased constraint (\ref{equ:unbiased_definition}).

\subsection{Gradient Distribution}
In this subsection, we use real-world datasets to justify the Gaussian assumptions of gradient distribution, which are used in the previous sub-section and are listed as follows:
\begin{assumption}[Gaussian distribution]
\label{assumption:distribution}
The global gradient $\bm{X}$ is a Gaussian vector with zero mean. The local gradients $\bm{Y}_k$ for $k=1,2,...,K$ are noisy versions of the global gradient $\bm{X}$, i.e., $\bm{Y}_k=\bm{X}+\bm{N}_k$ and the gradient noises $\bm{N}_k$ for $k=1,2,...,K$ are also Gaussian vectors with zero mean.
\end{assumption}
\begin{assumption}[Noise independence]
\label{assumption:noise_independence}
Each gradient noise $\bm{N}_k$ is independent of the global gradient $\bm{X}$ and gradient noises $\bm{N}_k$ for $k=1,2,...,K$ are independent of each other.
\end{assumption}
We train multilayer perceptron (MLP) model on the MNIST dataset to justify the two assumptions above.
The hyper-parameters are set as follows: momentum optimizer is 0.5, local batch size is 128, and learning rate is 0.01.

\subsubsection{Justification of Assumption \ref{assumption:distribution}}
We show the empirical probability density function (PDF) of gradients to justify the Gaussian Distribution.
\begin{figure}[t]
\centering

\subfigure[The PDF of global gradient over $6$ dimensions.]
{\begin{minipage}[t]{0.48\linewidth}
\centering
\includegraphics[width=3.2in]{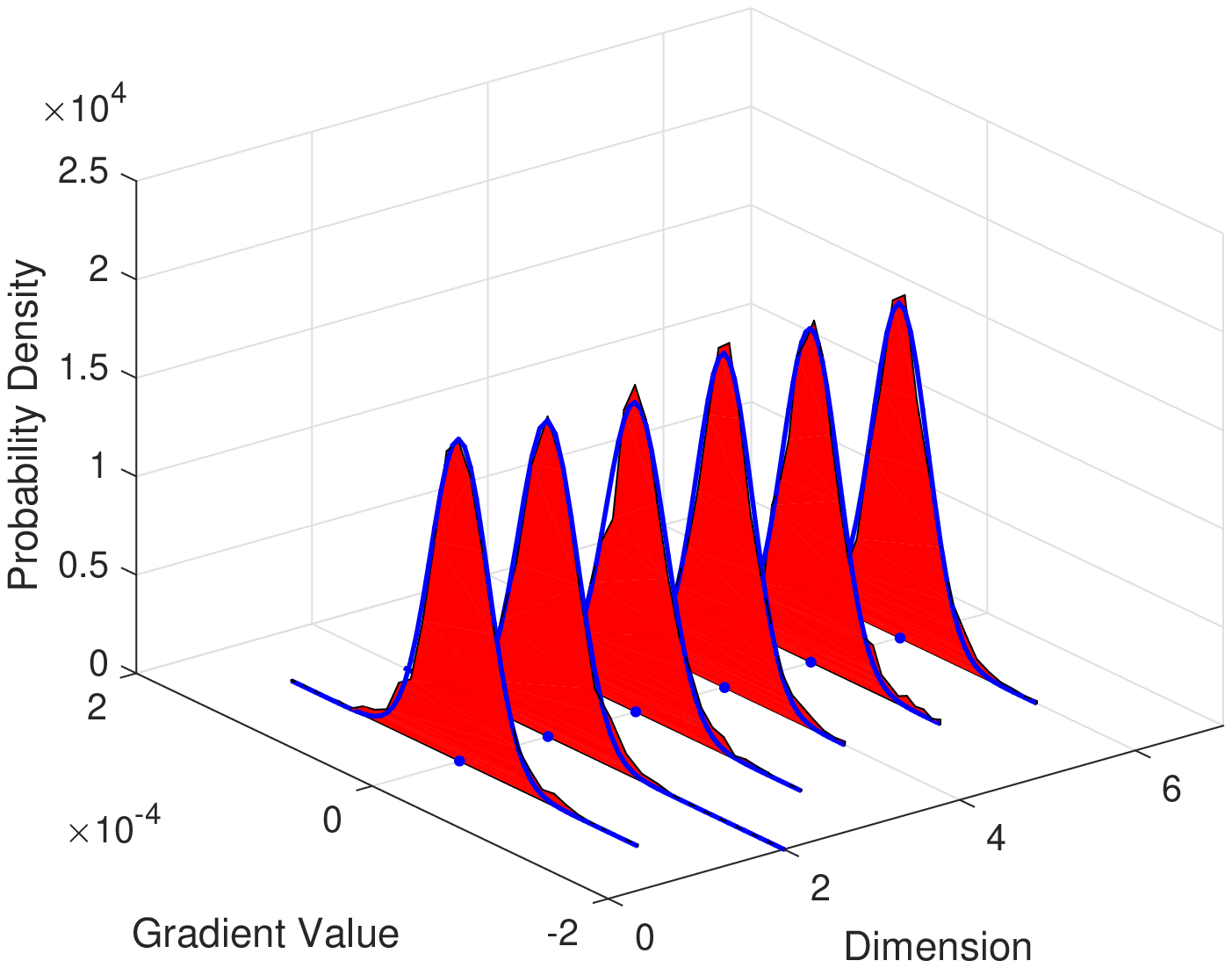}
\label{fig:gradient_distribution_over_dimension}
\end{minipage}}
\subfigure[The PDF of global gradient over iteration $1-10$.]
{\begin{minipage}[t]{0.48\linewidth}
\centering
\includegraphics[width=3.2in]{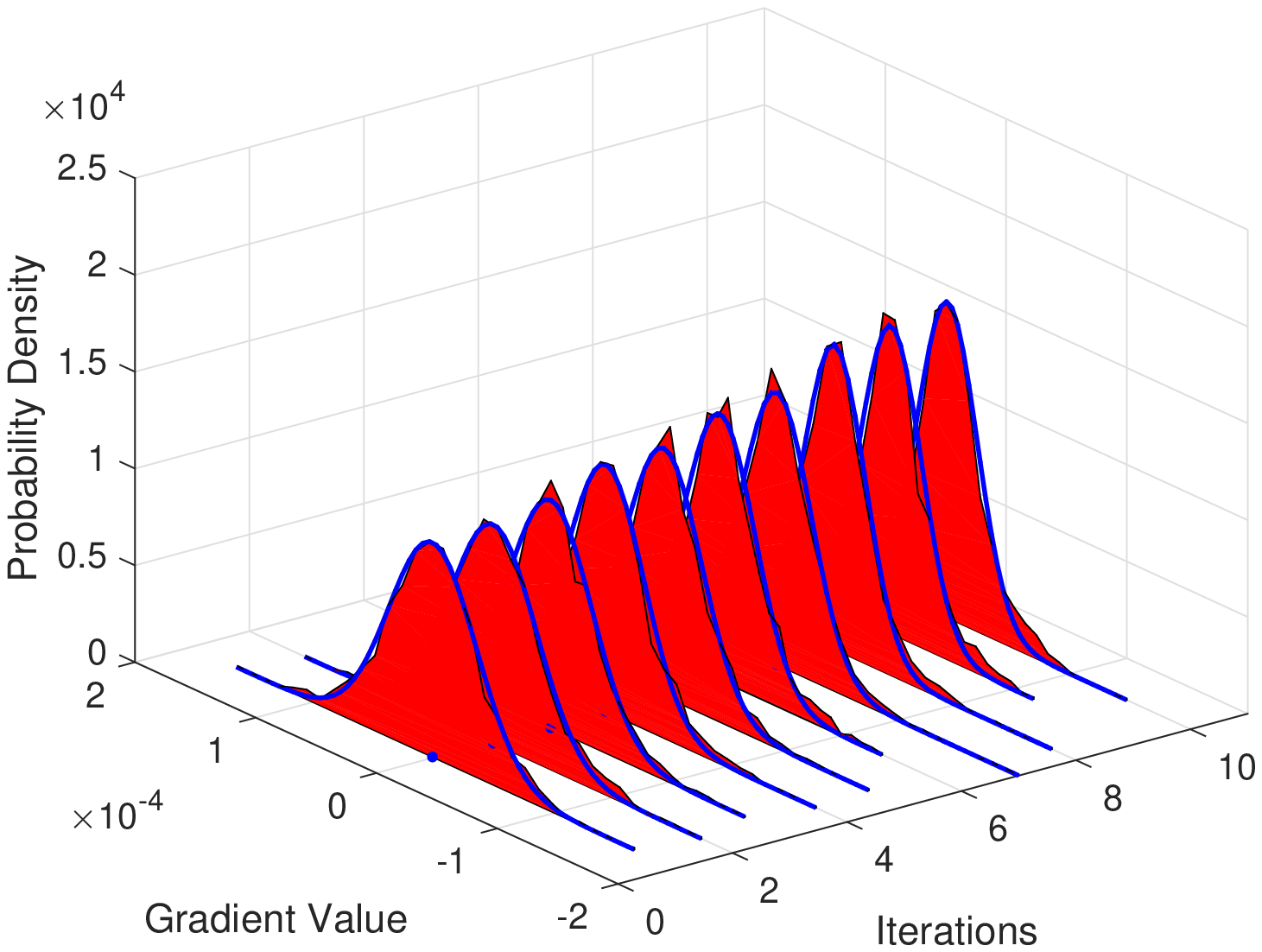}
\label{fig:gradient_distribution_over_iteration}
\end{minipage}}

\centering
\caption{The empirical PDF of global gradient over 2500 samples and its Gaussian fitting curve, where the mean of each Gaussian fitting curve is marked with blue dots.}
\label{fig:gradient_distribution} 
\end{figure}
Fig.~\ref{fig:gradient_distribution} illustrates the empirical PDF of global gradient over 2500 samples and its Gaussian fitting curve, where the mean of each Gaussian fitting curve is marked with blue dots and the square of correlation coefficient of the fitting is 0.99.
Recall that the global gradients are calculated through the GD algorithm.
We randomly initialize the model $\bm{w}(0)$ and update the model by GD on global dataset for $10$ iterations, and this process is repeated 2500 times.
It is observed from Fig.~\ref{fig:gradient_distribution} that the global gradient $\bm{X}$ follows a Gaussian distribution with zero mean.
Gaussian distribution of gradients is consistent with the observation in existing works \cite{NIPS2017_89fcd07f,sra2012optimization}.
In Fig.~\ref{fig:gradient_distribution_over_iteration}, it is observed that the variance of global gradients gradually decreases with iteration.
Intuitively, the magnitude of the global gradient realization decreases when the model approaches convergence.
The smaller the magnitude of a realization, the smaller the variance of the corresponding random variable.
Hence, the variance of global gradients decreases with iteration.
Note that the model converges only if the global gradient has zero mean and zero variance.

\begin{figure}[t]
\begin{centering}
\vspace{-0.2cm}
\includegraphics[scale=.50]{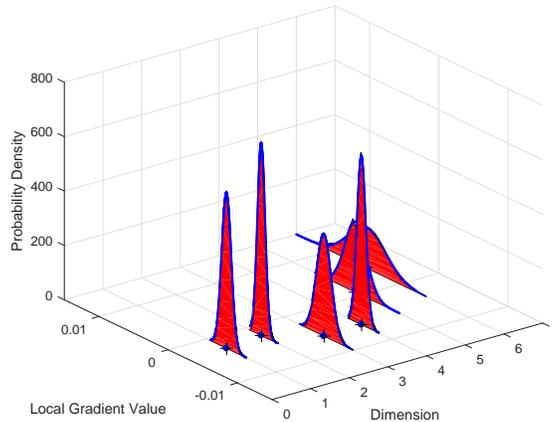}
\vspace{-0.1cm}
 \caption{\small{The empirically conditional PDF of local gradient $\bm{Y}_k$ given global gradient $\bm{X}=\bm{x}$ over 10000 samples and its Gaussian fitting curve, where the mean of each Gaussian fitting curve and the given global gradient realization $\bm{x}$ are marked with blue dot and black "+", respectively.}}
 \label{fig:local_gradient_distribution}
\end{centering}
\vspace{-0.3cm}
\end{figure}
Fig.~\ref{fig:local_gradient_distribution} illustrates empirically conditional PDF of local gradient $\bm{Y}_k$ given global gradient $\bm{X}=\bm{x}$ over 10000 samples and its Gaussian fitting curve, where the mean of each Gaussian fitting curve is marked with blue dot, the given global gradient realization $\bm{x}$ is marked with black "+" and the square of correlation coefficient of the fitting is 0.99.
Recall that each local gradient $\bm{Y}_k$ is calculated through the mini-batch SGD algorithm at worker node $k$.
The given global gradient realization $\bm{x}$ is obtained by GD based on the global dataset with a fixed model $\bm{w}$ at iteration 10.
The empirical distribution of $\bm{Y}_k$ at the given realization $\bm{x}$ is obtained by randomly drawing 10000 local batches.
It is observed that the conditional local gradient $\bm{Y}_k$ given $\bm{X}=\bm{x}$ follows a Gaussian distribution with mean $\bm{x}$, which justifies that the local gradient $\bm{Y}_k$ is a noisy version of global gradient $\bm{X}$ and corrupted by Gaussian noise with zero mean.

\subsubsection{Justification of Assumption \ref{assumption:noise_independence}}
We adopt Pearson correlation coefficients to verify the independence of gradient noises.
\begin{table}[t]
\centering
\caption{\small{Pearson correlation coefficient between gradient noise and global gradient, where the number of samples is 1000.}}
\label{table:noise_independent_over_global_gradient}
\begin{tabular}{|c|c|c|c|c|c|c|c|c|c|c|}
\hline
Iteration      & 1      & 2      & 3      & 4      & 5      & 6      & 7      & 8      & 9      & 10      \\ \hline
Dimension 1 & 0.0075 & -0.0064 & 0.0009 & 0.0123 & -0.023 & 0.0092 & -0.004 & -0.0069 & -0.0029 & 0.0046 \\ \hline
Dimension 2 & -0.0064 & -0.0110 & -0.0051 & -0.0152 & 0.0060 & 0.0046 & 0.0109 & -0.0110 & -0.0076 & -0.0014 \\ \hline
Dimension 3 & 0.0123 & -0.0051 & 0.0045 & 0.0115 & -0.0016 & 0.0166 & -0.0007 & 0.0301 & -0.0021 & -0.0028 \\ \hline
Dimension 4 & -0.0239 & -0.0152 & 0.0143 & 0.0045 & 0.0127 & 0.0166 & 0.0063 & -0.0072 & -0.0068 & -0.0073 \\ \hline
\end{tabular}
\end{table}
Pearson correlation coefficient between gradient noise and global gradient is shown in Table~\ref{table:noise_independent_over_global_gradient}, where the number of samples is 1000.
It is observed that the absolute value of Pearson correlation coefficient between gradient noise and global gradient is less than 0.03.
In general, two variables can be considered weakly correlated or uncorrelated when the absolute value of correlation coefficient between them is less than 0.1.
Hence, we can assume that each gradient noise $\bm{N}_k$ is independent of global gradient $\bm{X}$.
This conforms with the intuition that the gradient noise depends on the selected local batch and the selection is independent of global gradient.

\begin{figure}[t]
\begin{centering}
\vspace{-0.2cm}
\includegraphics[scale=.50]{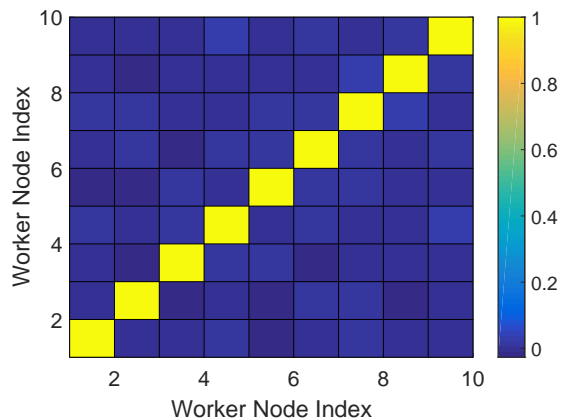}
\vspace{-0.1cm}
 \caption{\small{Absolute value of Pearson correlation coefficient of gradient noises between worker nodes, where the number of worker nodes is $K=10$ and the number of samples is 1000.}}
 \label{fig:noise_independent_over_worker_node}
\end{centering}
\vspace{-0.3cm}
\end{figure}
Fig.~\ref{fig:noise_independent_over_worker_node} illustrates absolute value of the Pearson correlation coefficient of gradient noises between worker nodes, where the number of worker nodes is $K=10$.
For each worker node, we sample 1000 local gradients and calculate the corresponding gradient noises.
It is observed that the Pearson correlation coefficient of gradient noises between different worker nodes are nearly zero (all of them are less than 0.02).
Hence, we can assume that the gradient noises $\bm{N}_k$ for $k=1,2,...,K$ are independent of each other.
This also conforms with the intuition that the gradient noise depends on the selected local batch and each worker node selects local batch independently.

\section{Rate Region Results}
In this section, we derive the optimal rate region for the model aggregation problem.
We first consider the models with single layer, and then extend the results to the models with multiple layers.

\subsection{Single-Layer Model}
In this case, the gradient vectors $\bm{X}$ and $\bm{Y}_k$, for $k=1,2,...,K$, reduce to scalar $X$ and $Y_k$, for $k=1,2,...,K$, respectively.
The optimal rate region result is provided in the following theorem:

\begin{theorem}
\label{theorem:rate_region}
For a given distortion $D$, the optimal rate region $\mathcal{R}_\star(D)$ is given by
\begin{align}
\mathcal{R}_\star(D)=\bigcup_{(r_{1},r_{2},...,r_{K})\in\mathcal{F}(D)}{\mathcal{R}(r_{1},r_{2},...,r_{K})},
\end{align}
where
\begin{align}
&\mathcal{R}(r_{1},r_{2},...,r_{K})=\bigg\{(R_{1},R_{2},...,R_{K}):\nonumber\\
&\sum_{k\in\mathcal{A}}{R_{k}}\geq\sum_{k\in\mathcal{A}}{r_{k}}+\frac{1}{2}\log{(\frac{1}{\sigma_{X}^2}+\frac{1}{D})}-\frac{1}{2}\log{\left(\frac{1}{\sigma_{X}^2}+\sum_{k\in\mathcal{A}^c}{\frac{1-\exp(-2r_{k})}{\sigma_{N_{k}}^2}}\right)},\forall \mathcal{A}\subseteq\mathcal{K}\bigg\},
\label{region:set_R}
\end{align}
and
\begin{align}
\mathcal{F}(D)=\left\{(r_{1},r_{2},...,r_{K})\in\mathbb{R}_+^K:\sum_{k=1}^K{\frac{1-\exp(-2r_{k})}{\sigma_{N_{k}}^2}}=\frac{1}{D}\right\}.
\label{region:set_r}
\end{align}
\end{theorem}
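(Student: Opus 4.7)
The plan is to prove achievability via a Berger--Tung coding scheme with Gaussian test channels paired with a linear \emph{unbiased} decoder, and to prove the converse by refining the Oohama entropy-power-inequality argument so that the unbiasedness constraint (\ref{equ:unbiased_definition}) is exploited in the distortion step. The key novelty relative to the classical Gaussian CEO result is that the decoder is restricted to unbiased estimators of $X$, which breaks the usual MMSE identity and replaces it with a best-linear-unbiased (BLUE) identity; this is what forces the distortion constraint in $\mathcal{F}(D)$ to be $\sum_k (1-e^{-2r_k})/\sigma_{N_k}^2 = 1/D$ rather than the $1/\sigma_X^2 + \sum_k (1-e^{-2r_k})/\sigma_{N_k}^2 = 1/D$ that would appear in the classical setting.

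\emph{Achievability.} For any $(r_1,\dots,r_K)\in\mathcal{F}(D)$, I would introduce the Gaussian test channels
\[
U_k = Y_k + Z_k,\qquad Z_k \sim \mathcal{N}\!\left(0, \frac{\sigma_{N_k}^2}{e^{2r_k}-1}\right),
\]
independent across $k$, so that $I(Y_k;U_k\mid X)=r_k$ and equivalently $U_k = X + V_k$ with $V_k\sim\mathcal{N}(0,1/\alpha_k)$ where $\alpha_k\triangleq(1-e^{-2r_k})/\sigma_{N_k}^2$. The decoder I would use is the BLUE $\hat X = (\sum_k \alpha_k U_k)/(\sum_k \alpha_k)$, which is unbiased and attains variance $(\sum_k\alpha_k)^{-1}=D$ by the definition of $\mathcal{F}(D)$. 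The Berger--Tung inner bound then delivers any rate tuple with $\sum_{k\in\mathcal{A}}R_k \ge I(Y_{\mathcal{A}};U_{\mathcal{A}}\mid U_{\mathcal{A}^c})$, and decomposing this mutual information through the chain rule and the conditional independence of $\{Y_k\}$ given $X$ yields
\[
I(Y_{\mathcal{A}};U_{\mathcal{A}}\mid U_{\mathcal{A}^c}) = \sum_{k\in\mathcal{A}} r_k + I(X;U_{\mathcal{A}}\mid U_{\mathcal{A}^c}),
\]
after which a short Gaussian variance computation identifies $I(X;U_{\mathcal{A}}\mid U_{\mathcal{A}^c})$ with the log-ratio appearing in (\ref{region:set_R}), upon substituting $\sum_k\alpha_k = 1/D$.

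\emph{Converse.} For any achievable scheme with encoder outputs $M_k = \phi_k(Y_k^n)$, I would define the single-letter parameters $r_k = \tfrac{1}{n} I(Y_k^n; M_k \mid X^n)$. Starting from
\[
\sum_{k\in\mathcal{A}} nR_k \ge H(M_{\mathcal{A}}\mid M_{\mathcal{A}^c}) \ge \sum_{k\in\mathcal{A}} I(Y_k^n;M_k\mid X^n) + I(X^n;M_{\mathcal{A}}\mid M_{\mathcal{A}^c}),
\]
the first term gives $n\sum_{k\in\mathcal{A}} r_k$, while the conditional entropy power inequality upper-bounds the conditional variance of $X$ given $M_{\mathcal{A}^c}$ by $(1/\sigma_X^2 + \sum_{k\in\mathcal{A}^c}\alpha_k)^{-1}$, and Jensen's inequality handles the single-letterization; this produces the negative log term in (\ref{region:set_R}). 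The last step, where (\ref{equ:unbiased_definition}) is essential, uses the identity $\mathbb{E}[\hat X X] = \sigma_X^2$ forced by unbiasedness together with an LMMSE projection to obtain $D \ge (\sum_k \alpha_k)^{-1}$, which strips away the $1/\sigma_X^2$ contribution on the distortion side and produces the constraint set $\mathcal{F}(D)$ exactly.

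\emph{Main obstacle.} The hard part is the converse. In the classical quadratic Gaussian CEO converse, the final distortion step invokes $D \ge \mathbb{E}[\mathrm{Var}(X\mid M_{\mathcal{K}})]$, which implicitly exploits the Gaussian prior on $X$ via MMSE and yields $(1/\sigma_X^2 + \sum_k \alpha_k)^{-1}$. Under (\ref{equ:unbiased_definition}), the decoder is barred from shrinking toward zero, so this MMSE step is loose; the technical challenge is to engineer an LMMSE/Jensen substitution that removes the $1/\sigma_X^2$ term and yields $\sum_k \alpha_k = 1/D$ rather than $\sum_k\alpha_k = 1/D - 1/\sigma_X^2$. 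This tightening is precisely what the introduction alludes to, and I expect it to hinge on careful handling of the cross-covariance $\mathbb{E}[\hat X^n X^n]=n\sigma_X^2$ arising from the unbiased constraint---the remaining ingredients (Gaussian test channels, EPI, chain rule, Jensen) being essentially standard.
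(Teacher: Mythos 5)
Your proposal matches the paper's approach: Berger--Tung achievability with Gaussian test channels and a weighted-average (BLUE) decoder, and an Oohama-style converse in which the entropy-power inequality bounds $I(X^n;M_{\mathcal{A}^c})$ while an LMMSE projection plus Jensen's inequality, exploiting unbiasedness through $\mathbb{E}[\hat X\mid X]=X$, tightens the classical $\tfrac12\log(\sigma_X^2/D)$ lower bound on $\tfrac1n I(X^n;M_{\mathcal{K}})$ to $\tfrac12\log(1+\sigma_X^2/D)$. The only minor imprecision is in how you describe the final distortion constraint: $\sum_k(1-e^{-2r_k})/\sigma_{N_k}^2 \geq 1/D$ does not come from the LMMSE step alone, but falls out of combining the tightened lower bound on $\tfrac1n I(X^n;M_{\mathcal{K}})$ with the EPI upper bound on the same quantity evaluated at $\mathcal{A}=\mathcal{K}$, which is precisely where the $1/\sigma_X^2$ terms cancel.
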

\begin{proof}
We only present the main idea here, while the complete proof is given in Appendix \ref{proof:one_layer_achievability} and Appendix \ref{proof:one_layer_converse}.
For the achievability part, we adopt the Berger-Tung scheme but design an unbiased estimator at the receiver.
For the converse part, we need to tighten the traditional converse bound derived in \cite{1365154} to match the newly proposed achievable scheme.
The challenge lies in deriving a larger bound for the term $\frac{1}{n}I(X^n;\hat{X}^n)$.
The traditional bound $\frac{1}{2}\log(\frac{\sigma_X^2}{D})$ for the term $\frac{1}{n}I(X^n;\hat{X}^n)$ can be easily derived by only applying MSE of $\hat{X}^n$, but it is not tight enough under the unbiased constraint.
We tighten this bound to $\frac{1}{2}\log(1+\frac{\sigma_X^2}{D})$ by first deriving the conditional variance of $X|\hat{X}$ via linear minimum mean squared error and then obtaining $D(X^n,\hat{X}^n)$ via Jensen's inequality.
\end{proof}

\begin{remark}
Parameter $r_{k}$ can be interpreted as the rate of the worker node $k$ for quantizing its gradient noise.
Recall that, in Appendix \ref{proof:one_layer_converse}, we tighten the bound $\frac{1}{2}\log(\frac{\sigma_X^2}{D})$ in (\cite{1365154},~Eq.8) to $\frac{1}{2}\log(1+\frac{\sigma_X^2}{D})$, hence our rate region $\mathcal{R}_\star(D)$ with unbiased estimation constraint is a subset of the classic rate region result in \cite{1365154}.
For example, the classic rate region achieves zero when the distortion $D$ is larger than the global gradient variance $\sigma_{X}^2$.
This indicates that the gradient estimator can be simply set to $\hat{X}=0$ at the receiver without any transmission.
However, $\hat{X}=0$ is not an option in our unbiased setting since $\mathbb{E}[\hat{X}|X=x]=0$ does not satisfy (\ref{equ:unbiased_definition}).
In this case, our rate region under unbiased estimator is still bounded away from zero.
As a result, for a distortion $D$ larger than the global gradient variance $\sigma_{X}^2$, the model will never converge by applying the classic achievable scheme while can converge by applying the proposed achievable scheme under unbiased estimator.
\end{remark}

\subsection{Multiple-Layers Model}
The following corollary extends Theorem \ref{theorem:rate_region} to the case where the model has multiple layers:
\begin{corollary}
\label{corollary:rate_region}
\begin{align}
\mathcal{R}_\star(D)=\bigg\{(R_1,...,R_K)\in\mathbb{R}_+^K:&R_k\geq\sum_{l=1}^L{R_{k,l}},k=1,2,...,K\nonumber\\
&\forall (R_{1,l},R_{2,l},...,R_{K,l})\in\mathcal{R}_l(r_{1,l},r_{2,l},...,r_{K,l}),l=1,2,...,L\nonumber\\
&\forall (r_{1,l},r_{2,l},...,r_{K,l})\in\mathcal{F}_l(D_l),l=1,2,...,L\nonumber\\
&\forall (D_1,D_2,...,D_L)\in\mathbb{R}_+^L,\sum_{l=1}^L{D_l}=D\bigg\}\label{region:D_l}
\end{align}
where
\begin{align}
&\mathcal{R}_{l}(r_{1,l},r_{2,l},...,r_{K,l})=\bigg\{(R_{1,l},R_{2,l},...,R_{K,l}):\nonumber\\
&\sum_{k\in\mathcal{A}}{R_{k,l}}\geq\sum_{k\in\mathcal{A}}{r_{k,l}}+\frac{1}{2}\log{(\frac{1}{\sigma_{X_l}^2}+\frac{1}{D_l})}-\frac{1}{2}\log{\left(\frac{1}{\sigma_{X_l}^2}+\sum_{k\in\mathcal{A}^c}{\frac{1-\exp(-2r_{k,l})}{\sigma_{N_{k,l}}^2}}\right)},\forall \mathcal{A}\subseteq\mathcal{K}\bigg\},
\label{region:set_R_l}
\end{align}
and
\begin{align}
\mathcal{F}_l(D_l)=\left\{(r_{1,l},r_{2,l},...,r_{K,l})\in\mathbb{R}_+^K:\sum_{k=1}^K{\frac{1-\exp(-2r_{k,l})}{\sigma_{N_{k,l}}^2}}=\frac{1}{D_l}\right\}.
\end{align}
\end{corollary}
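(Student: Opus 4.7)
The plan is to reduce the multiple-layers problem to $L$ parallel instances of the single-layer setup of Theorem~\ref{theorem:rate_region} by exploiting the diagonal covariance structure: both $\bm{X}$ and each $\bm{N}_k$ have independent layer-components, so the $L$ layers form mutually independent Gaussian CEO sub-problems with an additively separable MSE distortion.

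For achievability I would fix any split $(D_1,\ldots,D_L)$ with $\sum_l D_l=D$ and per-layer auxiliary parameters $(r_{1,l},\ldots,r_{K,l})\in\mathcal{F}_l(D_l)$. Each worker $k$ runs $L$ independent copies of the Berger--Tung-based scheme constructed in Appendix~\ref{proof:one_layer_achievability}, one per layer, producing codewords of rates $R_{k,1},\ldots,R_{k,L}$, and transmits their concatenation at total rate $R_k=\sum_l R_{k,l}$. The parameter server applies the unbiased estimator of Theorem~\ref{theorem:rate_region} layer-by-layer; this yields a reconstruction $\hat{\bm{X}}^n$ whose $l$-th component is unbiased with variance $D_l$. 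Independence across layers then gives joint unbiasedness, and the total MSE equals $\sum_l D_l=D$. Hence every tuple described by the right-hand side of (\ref{region:D_l}) lies in $\mathcal{R}_\star(D)$.

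For the converse, suppose $(R_1,\ldots,R_K)$ is achievable with unbiased estimator $\hat{\bm{X}}^n=\psi(\phi^K(\bm{Y}^{nK}))$ and distortion at most $D$. I would set $D_l:=\tfrac{1}{n}\sum_{i=1}^n\mathbb{E}[(X_l(i)-\hat{X}_l(i))^2]$ so that $\sum_l D_l\leq D$ by the separability of (\ref{equ:distortion}), and observe that the unbiased condition (\ref{equ:unbiased_definition}) immediately yields per-layer unbiasedness $\mathbb{E}[\hat{X}_l^n\,|\,X_l^n]=X_l^n$. Introducing, layer by layer, the auxiliaries $U_{k,l}$ that arise in the Appendix~\ref{proof:one_layer_converse} converse of Theorem~\ref{theorem:rate_region}, I would then invoke the chain rule together with the mutual independence of the layer-components of $\bm{Y}_k$ to obtain $nR_k\geq I(\bm{Y}_k^n;\phi_k(\bm{Y}_k^n))\geq\sum_{l=1}^L I(Y_{k,l}^n;U_{k,l})\geq n\sum_{l=1}^L R_{k,l}$, with each $(R_{1,l},\ldots,R_{K,l})\in\mathcal{R}_l(r_{1,l},\ldots,r_{K,l})$ for some $(r_{1,l},\ldots,r_{K,l})\in\mathcal{F}_l(D_l)$ by the single-layer converse applied to layer $l$.

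The main obstacle will be making the per-layer decomposition of the auxiliaries $U_{k,l}$ rigorous: the test channel that arises from an arbitrary joint code is not a priori a product channel across layers. I expect to resolve this exactly as in classical parallel Gaussian rate-distortion arguments: given the layer-wise independence of $\{X_l\}$ and of $\{N_{k,l}\}$ and the additive separability of the distortion, a reverse-waterfilling / convexification argument shows that restricting to product Gaussian test channels is without loss of optimality. Combined with the unbiased-estimator tightening already established in Theorem~\ref{theorem:rate_region}, this matches the achievability bound and gives the stated characterization of $\mathcal{R}_\star(D)$.
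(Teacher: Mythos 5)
Your achievability argument is essentially the paper's: fix a split $(D_1,\ldots,D_L)$ of $D$, run the single-layer Berger--Tung scheme of Theorem~\ref{theorem:rate_region} independently per layer, concatenate the codewords so $R_k=\sum_l R_{k,l}$, and use separability of the distortion and of the unbiasedness constraint across independent layers. That part is correct.

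The converse has a genuine gap. The chain
$nR_k\geq I(\bm{Y}_k^n;\phi_k(\bm{Y}_k^n))\geq\sum_{l} I(Y_{k,l}^n;U_{k,l})\geq n\sum_{l} R_{k,l}$
with $(R_{1,l},\ldots,R_{K,l})\in\mathcal{R}_l$ does not follow from ``the single-layer converse applied to layer $l$.'' The single-layer converse of Theorem~\ref{theorem:rate_region} produces subset-sum inequalities of the form $\sum_{k\in\mathcal{A}}R_k\geq(\cdot)$ for every $\mathcal{A}\subseteq\mathcal{K}$; it does not attach a well-defined per-worker, per-layer rate $R_{k,l}\leq\frac{1}{n}I(Y_{k,l}^n;U_{k,l})$ that would simultaneously satisfy all the subset constraints of $\mathcal{R}_l$, so your last inequality and the membership claim are unjustified. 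The paper avoids this by never defining per-layer rates in the converse: it lower-bounds $\sum_{k\in\mathcal{A}}R_k$ directly as in the single-layer proof, then uses the independence of $\{X_l\}$ and $\{N_{k,l}\}$ across layers to decompose the vector quantities $\frac{1}{n}I(\bm{X}^n;\bm{C}_\mathcal{K})$, $\frac{1}{n}I(\bm{X}^n;\bm{C}_{\mathcal{A}^c})$ and $\frac{1}{n}I(\bm{Y}_k^n;C_k|\bm{X}^n)$ into sums over $l$, and applies Lemma~\ref{lemma:first_term} and Lemma~\ref{lemma:second_term} per layer with $r_{k,l}\triangleq\frac{1}{n}I(Y_{k,l}^n;C_k|X_l^n)$ and $D_l\triangleq D(X_l^n,\hat{X}_l^n)$. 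Your proposed ``resolution'' via restriction to product Gaussian test channels and reverse waterfilling is also a misdiagnosis: a converse must hold for an arbitrary code and cannot assume a test-channel structure; what actually carries the decomposition is the source-side independence across layers, which requires no optimization over channels.
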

\begin{proof}
Please refer to Appendix \ref{proof:corollary:rate_region}.
\end{proof}

\begin{remark}
Parameter $D_l$ can be interpreted as the distortion of the gradient estimator on the $l$-th layer, $r_{k,l}$ can be interpreted as the rate of the worker node $k$ for quantizing its observation noise on the $l$-th layer, and $R_{k,l}$ can be interpreted as the rate contributed by worker node $k$ on the $l$-th layer.
\end{remark}

\section{Explicit Formula for Rate Region}
The expression of the rate region in section IV is implicit.
To facilitate the analysis of the minimum communication cost of model aggregation, we derive the rate region boundary and a closed-form sum-rate-distortion function in this section.

\subsection{Rate Region Boundary}
We can explicitly compute the boundary of the rate region $\mathcal{R}_\star(D)$ by solving the following optimization problem,
\begin{align}
\min_{(R_1,R_2,...,R_K)\in\mathcal{R}_\star(D)}\sum_{k\in\mathcal{K}}\alpha_{k}R_k=\min_{D_l}\min_{r_{k,l}}\min_{R_{k,l}}{\sum_{l=1}^L\sum_{k\in\mathcal{K}}{\alpha_k R_{k,l}}},
\label{optimization:boundary}
\end{align}
for all choices of auxiliary coefficients $(\alpha_1,\alpha_2,...,\alpha_K)\in\mathbb{R}_+^K$.
Without loss of generality, we assume that $(\alpha_1\geq\alpha_2\geq...\geq\alpha_K)$.

\begin{lemma}
\label{lemma:optimal_R_kl}
Given $D_l$, $r_{k,l}$ and $(\alpha_1\geq...\geq\alpha_K)$, the optimal choice of $(R_{1,l},R_{2,l},...,R_{K,l})\in\mathcal{R}_{l}(r_{1,l},r_{2,l},...,r_{K,l})$ for minimizing $\sum_{k\in\mathcal{K}}{\alpha_k R_{k,l}}$ is
\begin{align}
R_{k,l}=r_{k,l}+\frac{1}{2}\log{\left(\frac{1}{\sigma_{X_l}^2}+\sum_{i=k}^K{\frac{1-\exp(-2r_{i,l})}{\sigma_{N_{i,l}}^2}}\right)}-\frac{1}{2}\log{\left(\frac{1}{\sigma_{X_l}^2}+\sum_{i=k+1}^K{\frac{1-\exp(-2r_{i,l})}{\sigma_{N_{i,l}}^2}}\right)}.
\label{equ:optimal_R_kl}
\end{align}
\end{lemma}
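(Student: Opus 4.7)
The plan is to recognize the constraint set $\mathcal{R}_l(r_{1,l},\ldots,r_{K,l})$ as a contra-polymatroid
\[
\mathcal{R}_l=\bigl\{(R_{1,l},\ldots,R_{K,l})\in\mathbb{R}_+^K:\sum_{k\in\mathcal{A}}R_{k,l}\ge f(\mathcal{A}),\ \forall\mathcal{A}\subseteq\mathcal{K}\bigr\},
\]
where, with $a_{k,l}\triangleq(1-e^{-2r_{k,l}})/\sigma_{N_{k,l}}^2$,
\[
f(\mathcal{A})\triangleq\sum_{k\in\mathcal{A}}r_{k,l}+\tfrac{1}{2}\log\!\bigl(\tfrac{1}{\sigma_{X_l}^2}+\tfrac{1}{D_l}\bigr)-\tfrac{1}{2}\log\!\Bigl(\tfrac{1}{\sigma_{X_l}^2}+\sum_{k\in\mathcal{A}^c}a_{k,l}\Bigr),
\]
and to identify the candidate (\ref{equ:optimal_R_kl}) as the extreme point associated with the identity permutation $(1,2,\ldots,K)$. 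Abbreviating $s_k\triangleq\sum_{i=k}^{K}a_{i,l}$, note that $s_1=1/D_l$ by the constraint $(r_{1,l},\ldots,r_{K,l})\in\mathcal{F}_l(D_l)$. I will verify feasibility and optimality of this candidate separately.

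For feasibility, I would first observe that $\tilde R_{k,l}=r_{k,l}+\tfrac{1}{2}\log(1/\sigma_{X_l}^2+s_k)-\tfrac{1}{2}\log(1/\sigma_{X_l}^2+s_{k+1})$ telescopes, so that $\sum_{i=1}^{k}\tilde R_{i,l}=\sum_{i=1}^{k}r_{i,l}+\tfrac{1}{2}\log(1/\sigma_{X_l}^2+s_1)-\tfrac{1}{2}\log(1/\sigma_{X_l}^2+s_{k+1})=f(\{1,\ldots,k\})$; that is, every prefix-chain constraint is tight. For a general subset $\mathcal{A}=\{i_1<\cdots<i_m\}$, I would invoke supermodularity of $f$: since $\{i_1,\ldots,i_{j-1}\}\subseteq\{1,\ldots,i_j-1\}$ and supermodular set functions have nondecreasing marginal increments,
\[
f(\{i_1,\ldots,i_j\})-f(\{i_1,\ldots,i_{j-1}\})\le f(\{1,\ldots,i_j\})-f(\{1,\ldots,i_j-1\})=\tilde R_{i_j,l}.
\]
Summing in $j$ and using $f(\mathcal{A})=\sum_{j=1}^{m}[f(\{i_1,\ldots,i_j\})-f(\{i_1,\ldots,i_{j-1}\})]$ with $f(\emptyset)=0$ yields $\sum_{k\in\mathcal{A}}\tilde R_{k,l}\ge f(\mathcal{A})$. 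Supermodularity of $f$ itself reduces to that of $\mathcal{A}\mapsto-\tfrac{1}{2}\log(1/\sigma_{X_l}^2+\sum_{k\in\mathcal{A}^c}a_{k,l})$, which follows from the convexity of $x\mapsto-\log(1/\sigma_{X_l}^2+x)$ via the standard fact that $S\mapsto\phi(\sum_{k\in S}a_k)$ is supermodular whenever $\phi$ is convex.

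For optimality, I would apply Abel summation. Setting $S_k\triangleq\sum_{i=1}^{k}R_{i,l}$ and $\tilde S_k\triangleq\sum_{i=1}^{k}\tilde R_{i,l}$,
\[
\sum_{k=1}^{K}\alpha_k R_{k,l}=\sum_{k=1}^{K-1}(\alpha_k-\alpha_{k+1})S_k+\alpha_K S_K.
\]
The constraint for $\mathcal{A}=\{1,\ldots,k\}$ forces $S_k\ge f(\{1,\ldots,k\})=\tilde S_k$, and by hypothesis $\alpha_k-\alpha_{k+1}\ge 0$ while $\alpha_K\ge 0$, so replacing each $S_k$ by $\tilde S_k$ only decreases the right-hand side, giving $\sum_{k}\alpha_k R_{k,l}\ge\sum_{k}\alpha_k\tilde R_{k,l}$ for every feasible $(R_{1,l},\ldots,R_{K,l})$. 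This establishes that $\tilde R_{k,l}$ is the minimizer.

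The main obstacle is the clean verification of supermodularity of $f$; once that is in hand, the feasibility check is a one-line marginal-increment argument and the optimality step is essentially Abel summation plus a single sign check on $\alpha_k-\alpha_{k+1}$. An alternative, more pedestrian route would bypass supermodularity and verify feasibility by directly comparing $\prod_{k\in\mathcal{A}}(1/\sigma_{X_l}^2+s_k)/(1/\sigma_{X_l}^2+s_{k+1})$ with $(1/\sigma_{X_l}^2+1/D_l)/(1/\sigma_{X_l}^2+\sum_{k\in\mathcal{A}^c}a_{k,l})$ using concavity of the logarithm, but the contra-polymatroid perspective keeps the structure transparent and will make the closed-form sum-rate-distortion computation in the next subsection essentially immediate by picking $\alpha_1=\cdots=\alpha_K=1$.
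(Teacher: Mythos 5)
Your proof is correct and, for the optimality half, is essentially the same argument the paper uses: Abel summation (rewriting $\sum_k\alpha_k R_{k,l}$ in terms of prefix sums $S_k$) combined with the sign conditions $\alpha_k-\alpha_{k+1}\ge 0$ and $\alpha_K\ge 0$. Where you diverge is on feasibility of the candidate point. The paper simply observes that the candidate equals $I(Y_{k,l};U_{k,l}\mid U_{k+1,l},\ldots,U_{K,l})$ and asserts membership in $\mathcal{R}_l$ as immediate, implicitly leaning on the known polymatroid/chain-rule structure of the Berger--Tung region; you instead prove feasibility from scratch by establishing supermodularity of $f$ (via convexity of $x\mapsto-\log(1/\sigma_{X_l}^2+x)$ and the complementation symmetry of supermodularity) and then applying the marginal-increment inequality to an arbitrary $\mathcal{A}=\{i_1<\cdots<i_m\}$. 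Your route is self-contained and makes the corner-point/contra-polymatroid picture explicit, which is cleaner for a reader unfamiliar with the structure of distributed source-coding regions; the paper's route is shorter but defers the nontrivial feasibility step to folklore. One small remark: the paper's displayed chain in its proof of this lemma writes the prefix-sum differences as $\sum_{i\le k}R_{i,l}-\sum_{i\le k}R'_{i,l}$ (which is $\le 0$) and yet concludes $\ge 0$, so the signs in that display are flipped; your version has them the right way around.
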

\begin{proof}
Please refer to Appendix \ref{proof:lemma:optimal_R_kl}.
\end{proof}

Based on Lemma \ref{lemma:optimal_R_kl}, we rewrite optimization (\ref{optimization:boundary}) in terms of parameters $z_{k,l}=\frac{1-\exp(-2r_{k,l})}{\sigma_{N_{k,l}}^2}$ as
\begin{subequations}
\begin{align}
\mathcal{P}_1:\quad\min_{z_{k,l},D_l}\quad&\sum_{l=1}^L\sum_{k=1}^K{\alpha_{k}\left[-\frac{1}{2}\log\left(1-\sigma_{N_{k,l}}^2z_{k,l}\right)+\frac{1}{2}\log{\left(1+\frac{z_{k,l}}{\frac{1}{\sigma_{X_l}^2}+\sum_{i=k+1}^K{z_{i,l}}}\right)}\right]}\\
\mathnormal{s.t.}\quad &~~~~~~~~~~~~~~~~~~~~\sum_{k\in\mathcal{K}}z_{k,l}\geq\frac{1}{D_l},~~l=1,2,...,L\label{full_constraint:D_l}\\
&~~~~~~~~~~~~~~~~~~~~\sum_{l=1}^L{D_l}\leq D\label{full_constraint:D}.
\end{align}
\end{subequations}
It is obvious that the above optimization problem is convex and the inequality constraints (\ref{full_constraint:D_l}) and (\ref{full_constraint:D}) are active, so it can be solved using Lagrange minimization.
Problem $\mathcal{P}_1$ can not only calculate the rate region boundary, but also solve the rate allocation problem in wireless edge learning by simply replacing $(\alpha_1,\alpha_2,...,\alpha_K)$ with other realistic parameters.
Take the model aggregation problem in wireless edge learning with channel fading as an example.
For a target gradient distortion, the goal is to find the optimal rate allocation to minimize the bandwidth consumption.
In this case, $\alpha_k$, for $k=1,2,...,K$, in Problem $\mathcal{P}_1$ can be replaced with the bandwidth consumed per rate, which can be calculated by the known channel gain, transmit power, and channel noise of worker node $k$.
Then, the solution to the rewritten Problem $\mathcal{P}_1$ is the optimal rate allocation.

\subsection{Sum-rate-distortion function}
In general, when considering the communication efficiency of distributed learning, such as model compress and aggregation frequency control, we only care about the total communication cost, i.e., the sum-rate-distortion function, which is defined by
\begin{align}
R_{sum}(D)\triangleq\min_{(R_1,R_2,...,R_K)\in\mathcal{R}_\star(D)}{\sum_{k\in\mathcal{K}}{R_k}}.
\label{definition:sum_rate_distortion_function}
\end{align}
The sum-rate-distortion function $R_{sum}(D)$ indicates the minimum communication cost required to achieve a particular gradient distortion $D$.
To facilitate the analysis of the communication gain obtained from exploiting the correlation of the local gradients between worker nodes, we assume that $L=1$ and the worker nodes are homogeneous with $\sigma_{N_1}^2=...=\sigma_{N_K}^2=\sigma_{N}^2$.
Then, we can obtain a closed-form expression for the sum-rate-distortion function in the following corollary, the proof of which is given in Appendix \ref{proof:corollary:sum_rate_distortion_function}.
\begin{corollary}
For every $D>0$, the sum-rate-distortion function is given by
\begin{align}
R_{sum}(D)=\frac{K}{2}\log\left(1+\frac{\sigma_{N}^2}{KD-\sigma_{N}^2}\right)+\frac{1}{2}\log(1+\frac{\sigma_{X}^2}{D}).
\label{equ:sum_rate_distortion_function}
\end{align}
\label{corollary:sum_rate_distortion_function}
\end{corollary}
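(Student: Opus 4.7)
The plan is to specialize the convex program $\mathcal{P}_1$ to the homogeneous single-layer setting with $\alpha_k = 1$ for all $k$, and then exploit symmetry to obtain a closed-form minimizer. First, since $L=1$, I drop the layer index and set $\alpha_k=1$, $\sigma_{N_k}^2=\sigma_N^2$. By Lemma~\ref{lemma:optimal_R_kl} the optimal per-terminal rates are
\begin{align}
R_k = r_k + \tfrac{1}{2}\log\!\Bigl(\tfrac{1}{\sigma_X^2}+\sum_{i=k}^{K}z_i\Bigr)-\tfrac{1}{2}\log\!\Bigl(\tfrac{1}{\sigma_X^2}+\sum_{i=k+1}^{K}z_i\Bigr),
\end{align}
so summing over $k$ produces a telescoping cancellation, leaving
\begin{align}
\sum_{k=1}^{K} R_k = \sum_{k=1}^{K} r_k + \tfrac{1}{2}\log\!\Bigl(\tfrac{1}{\sigma_X^2}+\sum_{i=1}^{K}z_i\Bigr) - \tfrac{1}{2}\log\!\tfrac{1}{\sigma_X^2}.
\end{align}
Invoking the active constraint $\sum_k z_k = 1/D$, the second and third terms collapse to $\tfrac{1}{2}\log(1+\sigma_X^2/D)$, which already matches the second summand in the claimed formula.

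Next, I would reduce the problem to minimizing $\sum_{k=1}^K r_k = -\tfrac{1}{2}\sum_k\log(1-\sigma_N^2 z_k)$ subject to $\sum_k z_k = 1/D$ and $z_k\in[0,1/\sigma_N^2)$. This subproblem is convex and separable with a single linear coupling constraint, so Lagrangian stationarity yields $\sigma_N^2/[2(1-\sigma_N^2 z_k)] = \lambda$ for all $k$, forcing $z_k$ to be constant. The budget constraint then gives $z_k = 1/(KD)$, which is admissible precisely when $KD>\sigma_N^2$, i.e. when the distortion lies above the centralized BLUE floor $\sigma_N^2/K$. Substituting back,
\begin{align}
\sum_{k=1}^{K}r_k = -\tfrac{K}{2}\log\!\Bigl(1-\tfrac{\sigma_N^2}{KD}\Bigr) = \tfrac{K}{2}\log\!\Bigl(1+\tfrac{\sigma_N^2}{KD-\sigma_N^2}\Bigr),
\end{align}
which supplies the first summand in \eqref{equ:sum_rate_distortion_function}. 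Adding the two contributions yields the stated expression.

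The only genuine obstacle is justifying the symmetric solution: I must confirm that $\mathcal{P}_1$ (equivalently, the reduced problem above) is convex in $(z_1,\dots,z_K)$ on the feasible box, that the sum-distortion constraint is active at the optimum (otherwise raising any $z_k$ strictly decreases $\sum r_k$ would contradict optimality — one checks $r_k$ is increasing in $z_k$, so tightness actually requires the reverse argument: any slack in $\sum z_k \geq 1/D$ means we can strictly reduce some $r_k$, lowering the objective, so tightness holds), and that the KKT point with equal $z_k$ is therefore the unique minimizer by strict convexity of $-\log(1-\sigma_N^2 z)$. I would handle convexity by inspection of second derivatives and appeal to symmetry and strict convexity for uniqueness; no further calculation beyond what is sketched above is needed.
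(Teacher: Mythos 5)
Your proof is correct and follows essentially the same route as the paper's: the paper sets $\mathcal{A}=\mathcal{K}$ in \eqref{region:set_R} to obtain $\sum_k R_k = \sum_k r_k + \tfrac{1}{2}\log(1+\sigma_X^2/D)$, then minimizes $\sum_k r_k$ over the constraint \eqref{region:set_r} by symmetry. You reach the same two-term decomposition by telescoping the corner-point rates from Lemma~\ref{lemma:optimal_R_kl}, which is just a more explicit way of invoking the same contra-polymatroid structure the paper's one-line ``$\min_{r_k}\min_{R_k}$'' implicitly relies on, and you fill in the symmetric KKT argument and the admissibility condition $KD>\sigma_N^2$ that the paper leaves tacit.
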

\begin{remark}
The sum-rate in (\ref{equ:sum_rate_distortion_function}) consists of two terms.
The first term depends on the variance of gradient noise, $\sigma_N^2$ as well as the number of worker nodes, $K$.
It can be interpreted as the rate for quantizing $K$ independent gradient noises.
The second term depends on the variance of the global gradient, $\sigma_X^2$, which is the classical channel capacity for a Gaussian channel with a transmit power $\sigma_{X}^2$ and a noise power $D$.
It thus can be interpreted as the rate for quantizing the global gradient.
If each worker node transmits its own local gradient without distributed source coding, the sum-rate-distortion function can easily reduce to $R_{in}(D)=\frac{K}{2}\log\left(1+\frac{\sigma_{N}^2}{KD-\sigma_{N}^2}\right)+\frac{K}{2}\log(1+\frac{\sigma_{X}^2}{KD})$ by independently transmitting each $U_k$ in the achievability of Theorem \ref{theorem:rate_region}.
Note that the rate difference between $R_{in}(D)$ and $R_{sum}(D)$, i.e., $\frac{K}{2}\log(1+\frac{\sigma_{X}^2}{KD})-\frac{1}{2}\log(1+\frac{\sigma_{X}^2}{D})$, represents the communication gain by exploiting the correlation between worker nodes, which can be achieved by practical distributed source coding \cite{4494710,zhang2021adaptive}.
Based on the gradient statistics $\sigma_X^2(t)$ and $\sigma_N^2(t)$ on real-world datasets, we will analyze the optimal communication cost of distributed learning in the next section.
\end{remark}

\section{Communication Cost Analysis in Distributed Learning}
In this section, we analyse the communication cost of distributed learning based on real-world datasets.
We first demonstrate the communication gain of model aggregation by exploiting the correlation between worker nodes based on the sum-rate-distortion function (\ref{equ:sum_rate_distortion_function}).
Then, we analyze the total communication cost to achieve a target convergence bound of convex mini-batch SGD.
We also verify the communication cost in the non-convex case by numerical results.

We conducted simulation in a simulated environment where the number of worker nodes involved in each training iteration is $K=10$ if not specified otherwise.
We evaluate the model training on the MNIST, SVHN and CIFAR-10 datasets.
We adopt the multilayer perceptron (MLP) model on the MNIST dataset, the Resnet20 model on the SVHN dataset, and the Resnet50 model on the CIFAR-10 dataset.
The hyper-parameters are set as follows: momentum optimizer is 0.5, local batch size is 128 and learning rate $\eta=0.01$.

\subsection{Optimal Communication Cost over Iterations}
We first study the gradient statistics on real-world datasets.
\begin{figure}[t]
\centering
\subfigure[MLP model on MNIST.]
{\begin{minipage}[t]{0.32\linewidth}
\centering
\includegraphics[width=2.1in]{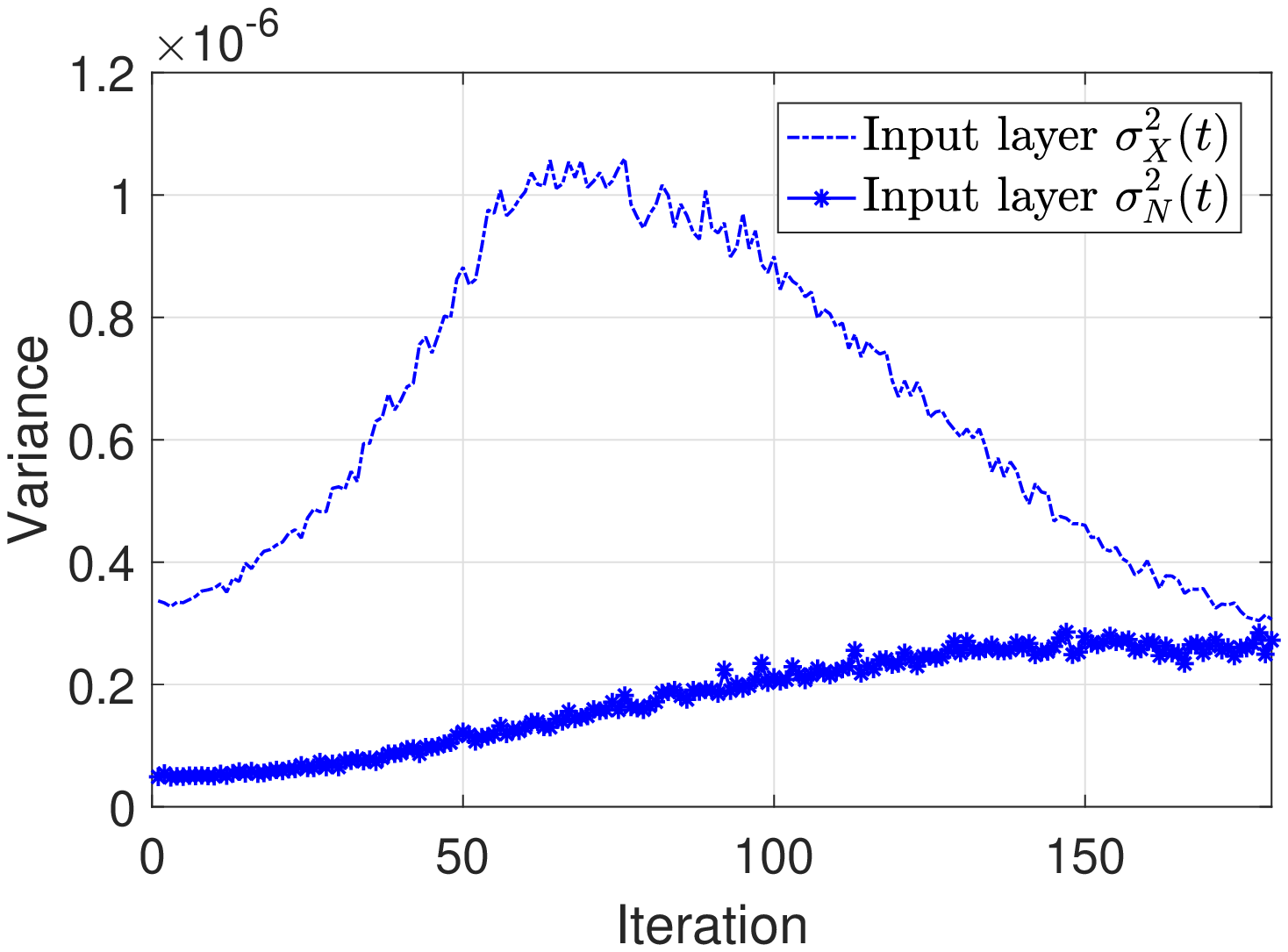}
\label{fig:statistic_over_iteration_mlp}
\end{minipage}}
\subfigure[Resnet20 model on SVHN.]
{\begin{minipage}[t]{0.32\linewidth}
\centering
\includegraphics[width=2.1in]{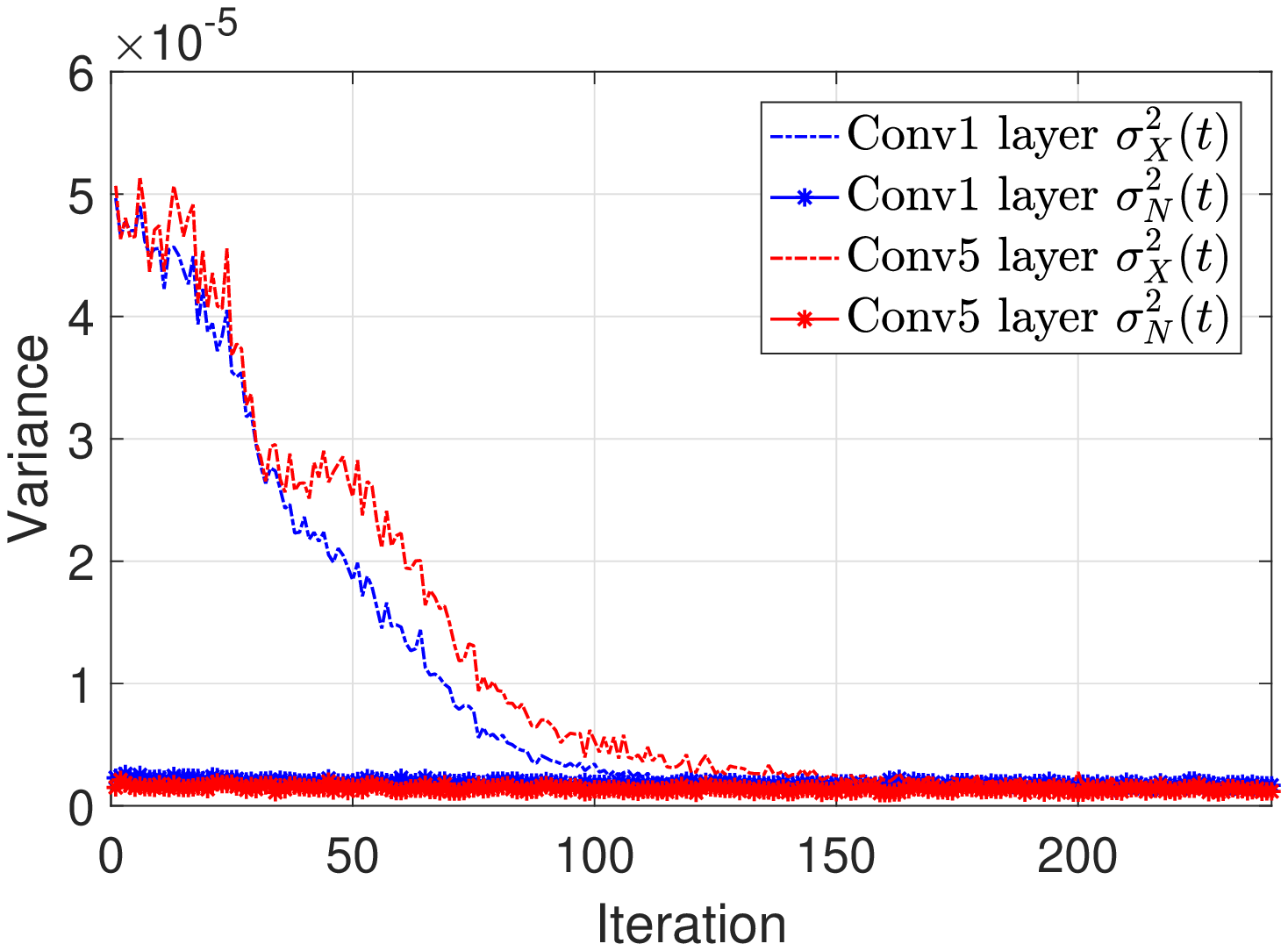}
\label{fig:statistic_over_iteration_resnet20}
\end{minipage}}
\subfigure[Resnet56 model on CIFAR-10.]
{\begin{minipage}[t]{0.32\linewidth}
\centering
\includegraphics[width=2.1in]{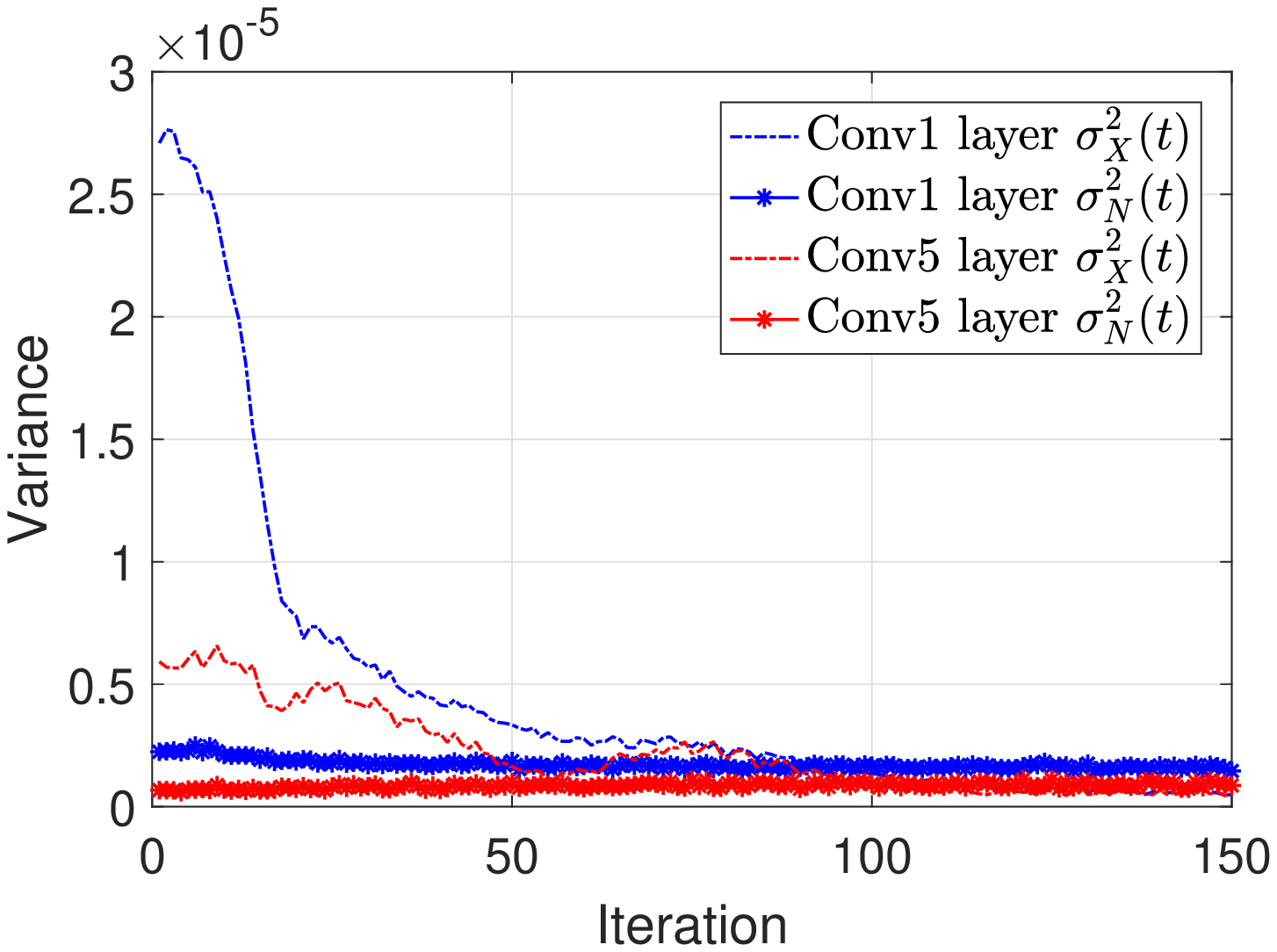}
\label{fig:statistic_over_iteration_resnet56}
\end{minipage}}
\centering
\caption{Gradient statistics $\sigma_X^2(t)$ and $\sigma_N^2(t)$ over iterations for three models, where the model are updated with ideal gradients without any transmission error.}
\label{fig:statistic_over_iteration} 
\end{figure}
Fig.~\ref{fig:statistic_over_iteration} shows the gradient statistics $\sigma_X^2(t)$ and $\sigma_N^2(t)$ over iterations for three models.
These three models approach convergence at the end of the last iteration.
The gradient statistics $\sigma_X^2(t)$ and $\sigma_N^2(t)$ are calculated offline based on the perfect local gradients.
It is observed that the global gradient variance $\sigma_X^2(t)$ is large at the beginning of the model training and gradually decreases over the iterations, while the gradient noise variance $\sigma_N^2(t)$ remains approximately unchanged for all three models.
This indicates that the correlation between the local gradients is strong at the beginning and gradually weakens over iterations.
Intuitively, in SGD-based learning, the initial model is far away from the convergent point at the beginning of the model training, and the common part in the local gradients, i.e., the global gradient, dominates, hence $\sigma_X^2(t)$ is large at the beginning.
When the model almost converges, the global gradient $\bm{g}(t)$ approaches zero, hence $\sigma_X^2(t)$ gradually decreases to zero.
On the other hand, the gradient noise variance $\sigma_N^2(t)$ remains approximately unchanged due to the random selection of the local batch throughout the training process.

Based on the above gradient statistics, the sum-rate-distortion function (\ref{equ:sum_rate_distortion_function}) can describe the optimal communication cost of model aggregation in distributed learning.
In the following, we analyze the communication redundancy of SignSGD \cite{pmlr-v80-bernstein18a} (a state-of-the-art gradient quantization scheme) by comparing the communication cost of SignSGD with $R_{in}(D)$ and $R_{sum}(D)$.
\begin{figure}[t]
\centering
\subfigure[MLP model on MNIST.]
{\begin{minipage}[t]{0.32\linewidth}
\centering
\includegraphics[width=2.1in]{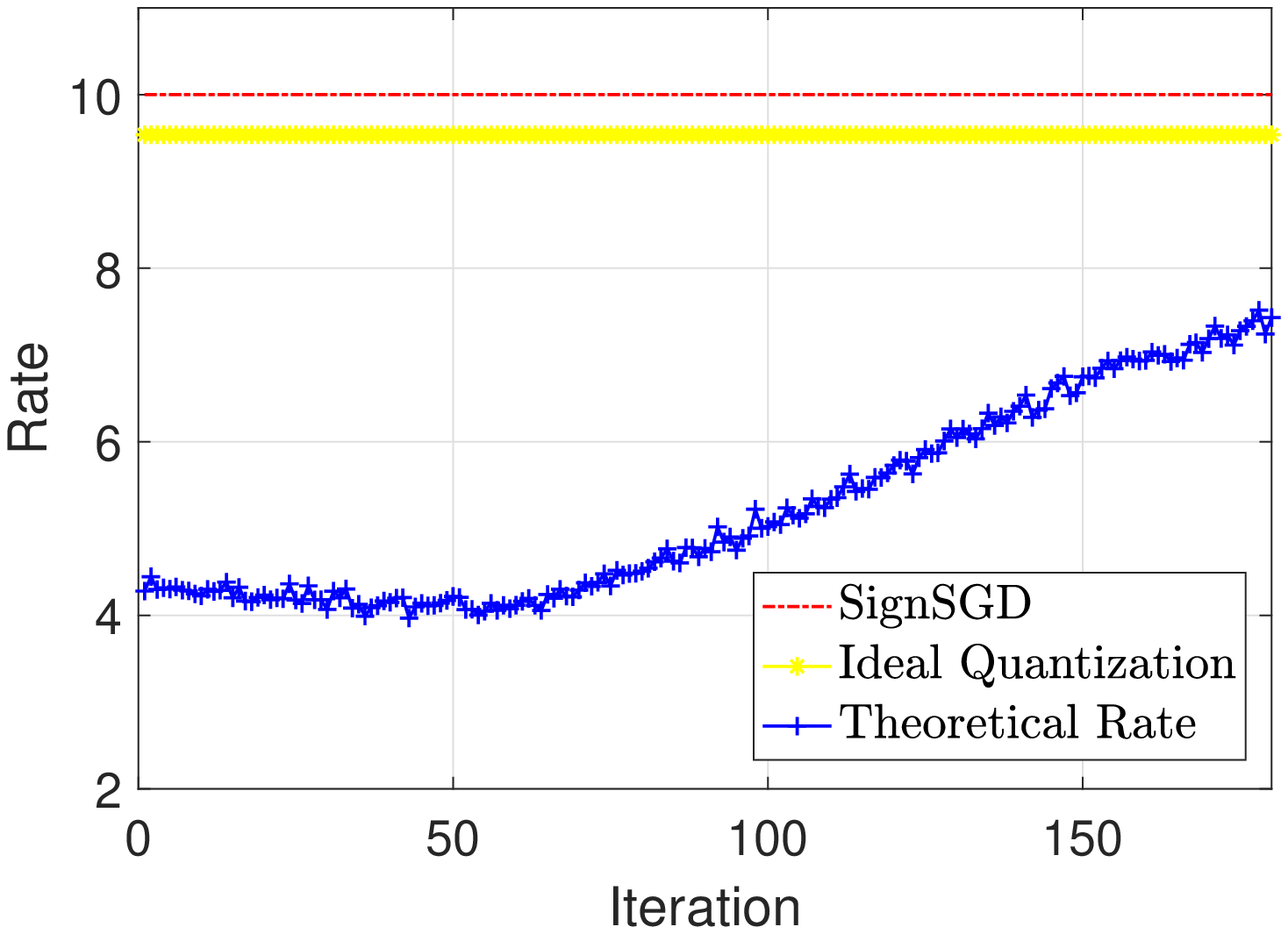}
\label{fig:rate_over_iteration_mlp}
\end{minipage}}
\subfigure[Resnet20 model on SVHN.]
{\begin{minipage}[t]{0.32\linewidth}
\centering
\includegraphics[width=2.1in]{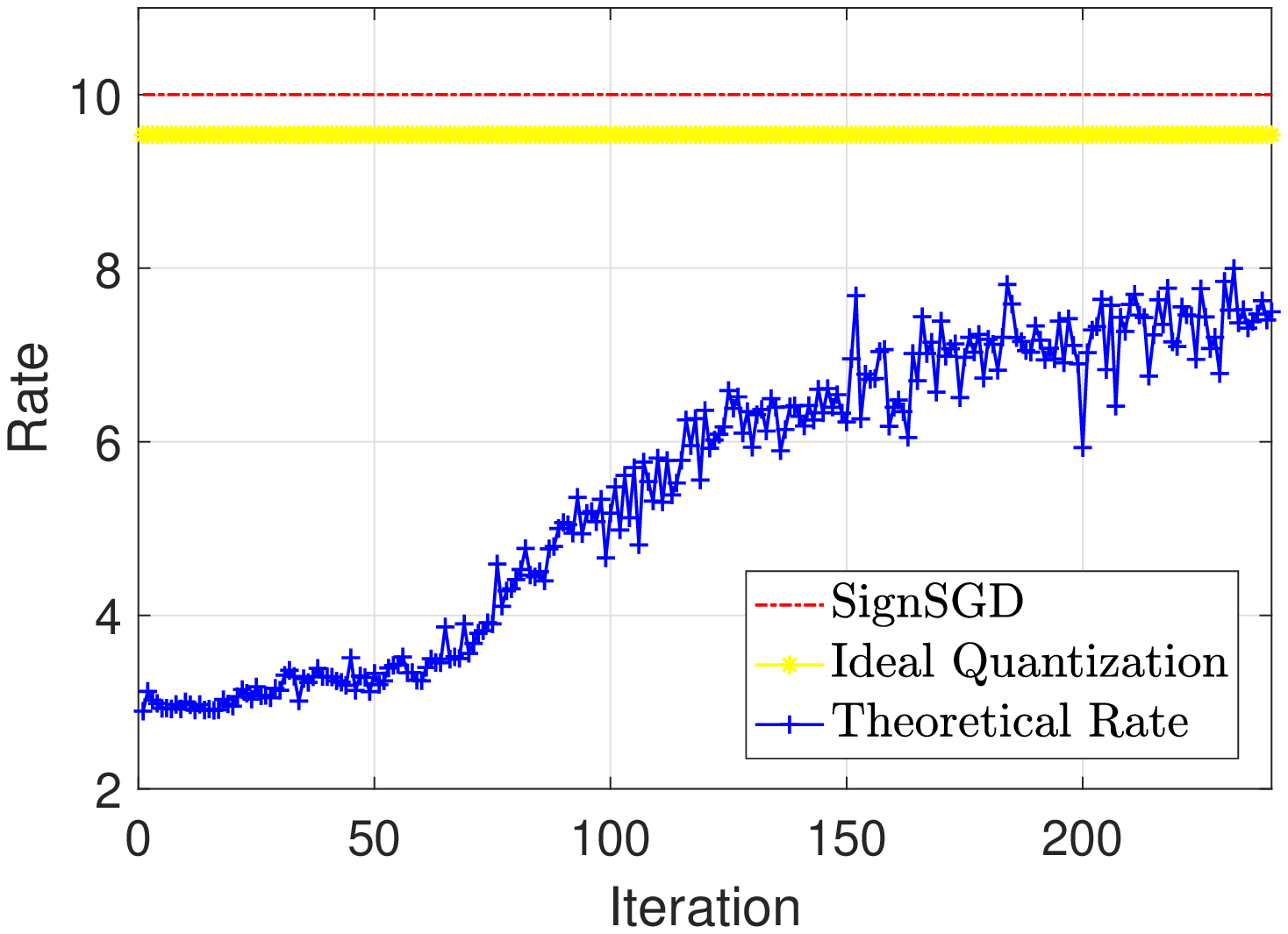}
\label{fig:rate_over_iteration_resnet20}
\end{minipage}}
\subfigure[Resnet56 model on CIFAR-10.]
{\begin{minipage}[t]{0.32\linewidth}
\centering
\includegraphics[width=2.1in]{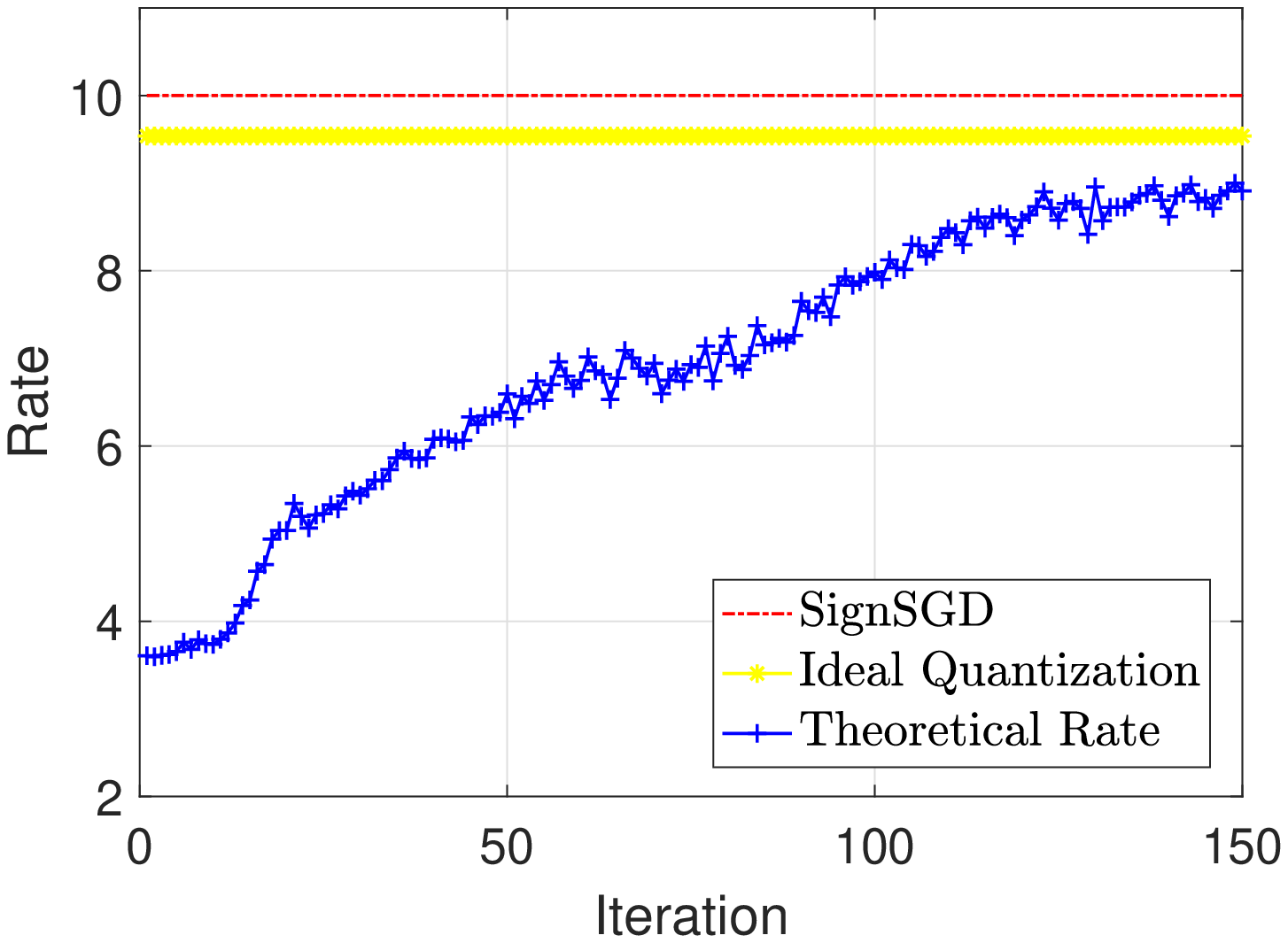}
\label{fig:rate_over_iteration_resnet56}
\end{minipage}}
\centering
\caption{The communication rate for SignSGD, the communication rate for ideal quantization, $R_{in}(D)$, and the theoretical minimum achievable rate, $R_{sum}(D)$, where the quantized value of SignSGD is $\{\pm\sqrt{\frac{2}{\pi}}\sigma\}$ and the distortion corresponding to the theoretical rate is equal to that of SignSGD.}
\label{fig:rate_over_iteration} 
\end{figure}
Fig.~\ref{fig:rate_over_iteration} shows the communication rate for SignSGD, the communication rate for ideal quantization, $R_{in}(D)$, and the theoretical minimum achievable rate, $R_{sum}(D)$, where the distortion $D$ remains the same for all the schemes.
SignSGD uses one bit to quantize each dimension of the local gradient, and we set the quantized value as $\{\pm\sqrt{\frac{2}{\pi}}\sigma\}$, which is optimal for the Gaussian sources with variance $\sigma^2$.
In distributed learning, the local gradients are Gaussian distributed with variance $\sigma_X^2+\sigma_N^2$, hence we have $\sigma=\sqrt{\sigma_X^2+\sigma_N^2}$.
The ideal quantization scheme can achieve the granular gain to minimize the quantization noise, but each worker node still transmits each quantized local gradient independently.
The distortion of SignSGD can be calculated by $D=\frac{1}{K}\left[\sigma_N^2+\frac{\pi-2}{\pi}(\sigma_N^2+\sigma_X^2)\right]$, then the corresponding sum rate $R_{in}(D)$ and $R_{sum}(D)$ can be calculated.
Note that substituting distortion $D$ of SignSGD into $R_{in}(D)$, we have $R_{in}=\frac{K}{2}\log(1+\frac{\pi}{\pi-2})$.
Hence, the communication rate of ideal quantization is a constant in Fig.~\ref{fig:rate_over_iteration}.
It is observed that there is a small gap between SignSGD and ideal quantization.
This is because scalar quantization encodes each dimension separately and does not exploit the granular gain.
This gap can be bridged by using vector quantization \cite{4494710}.
It is also observed that there is a significant gap between the ideal quantization scheme and the theoretical communication cost.
Moreover, the gap is large at the beginning of the training process.
This indicates that there is still redundant information in the quantized local gradients, especially at the beginning of model training, which is consistent with the observation in Fig.~\ref{fig:statistic_over_iteration}.
This gap can be bridged by a practical Slepian-Wolf coding for the quantized local gradients \cite{4494710,zhang2021adaptive}.

\subsection{Total Communication Cost}
Combining the sum-rate-distortion function in Corollary \ref{corollary:sum_rate_distortion_function} and the convergence bound in Theorem \ref{theorem:convergence_rate} for mini-batch SGD algorithms, we can directly yield the following result:
\begin{corollary}
\label{corollary:convex_communication_efficiency}
Let $\omega\subseteq\mathbb{R}^P$ be a convex set, and let $F:\omega\rightarrow\mathbb{R}$ be a convex and $\beta$-smooth function.
Let $\bm{w}(0)\in\omega$ be the initial point, and let $A^2=\sup_{\bm{w}\in\omega}{\|\bm{w}-\bm{w}(0)\|^2}<\infty$.
Let $\sigma_X^2(t)$ be the variance of the global gradient and $\sigma_N^2(t)$ be the variance of the gradient noise at iteration $t$.
The parameter server estimates the global gradient from $K$ identical worker nodes.
Suppose the model training accesses the gradient estimators $\{\bm{\hat{X}}(t)\}_{t=1}^T$ with distortion bound $D$, i.e., $\mathbb{E}\|\bm{X}(t)-\bm{\hat{X}}(t)\|^2\leq D$ for all $t=1,2,...,T$, and with step size $\eta(t)=\frac{\gamma}{\beta+1}$, where $\gamma$ is as in Theorem \ref{theorem:convergence_rate}.

For any given convergence bound $\epsilon>0$, in order to guarantee that $\mathbb{E}\left[F\left(\frac{1}{T}\sum_{t=1}^T{\bm{w}(t)}\right)\right]-\min_{\bm{w}\in\mathbb{R}^P}{F(\bm{w})}\leq\epsilon$, the number of iterations should satisfy
\begin{align}
T\geq A^2\left(\sqrt{\frac{D}{2\epsilon^2}+\frac{\beta}{\epsilon}}+\sqrt{\frac{D}{2\epsilon^2}}\right)^2.
\label{equ:number_of_iterations}
\end{align}
Moreover, the minimum achievable communication cost at iteration $t$ is given by
\begin{align}
P\left[\frac{K}{2}\log\left(1+\frac{\sigma_{N}^2(t)}{KD-\sigma_{N}^2(t)}\right)+\frac{1}{2}\log(1+\frac{\sigma_{X}^2(t)}{D})\right].
\label{equ:lower_bound_communication}
\end{align}
\end{corollary}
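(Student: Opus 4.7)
The plan is to treat Corollary \ref{corollary:convex_communication_efficiency} as a direct consequence of Theorem \ref{theorem:convergence_rate} and Corollary \ref{corollary:sum_rate_distortion_function}. The two bounds are essentially independent: the iteration count bound (\ref{equ:number_of_iterations}) comes entirely from inverting Theorem \ref{theorem:convergence_rate}, while the per-iteration rate bound (\ref{equ:lower_bound_communication}) comes from multiplying the sum-rate-distortion function $R_{sum}(D)$ in Corollary \ref{corollary:sum_rate_distortion_function} by the per-iteration source length $P$. So the proof is really two short, separate computations glued together.

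For the per-iteration rate (\ref{equ:lower_bound_communication}), the argument is essentially a citation: at iteration $t$, the gradient vector is modelled (Section III) as a length-$P$ sequence of scalar Gaussian sources (one-layer case with $L=1$) with global variance $\sigma_X^2(t)$ and i.i.d.\ noise variance $\sigma_N^2(t)$ at each of the $K$ homogeneous workers. Corollary \ref{corollary:sum_rate_distortion_function} then gives the minimum sum-rate per source symbol as $R_{sum}(D)=\frac{K}{2}\log(1+\frac{\sigma_N^2(t)}{KD-\sigma_N^2(t)})+\frac{1}{2}\log(1+\frac{\sigma_X^2(t)}{D})$, and multiplying by the block length $P$ yields (\ref{equ:lower_bound_communication}).

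For the iteration bound, I would start from Theorem \ref{theorem:convergence_rate} and enforce $A\sqrt{2D/T}+\beta A^2/T\leq\epsilon$. Substituting $u=1/\sqrt{T}$ turns this into the quadratic inequality $\beta A^2 u^2 + A\sqrt{2D}\,u - \epsilon \leq 0$, whose positive root gives
\begin{equation*}
u \leq \frac{-A\sqrt{2D}+\sqrt{2A^2D+4\beta A^2\epsilon}}{2\beta A^2}
= \frac{\sqrt{2D+4\beta\epsilon}-\sqrt{2D}}{2\beta A}.
\end{equation*}
Inverting and rationalising by multiplying numerator and denominator by the conjugate $\sqrt{2D+4\beta\epsilon}+\sqrt{2D}$ collapses the denominator to $4\beta\epsilon$, giving $\sqrt{T}\geq \frac{A(\sqrt{2D+4\beta\epsilon}+\sqrt{2D})}{2\epsilon}$. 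Squaring and pulling the $4\epsilon^2$ inside the parentheses yields exactly (\ref{equ:number_of_iterations}).

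The only step that requires care rather than being purely mechanical is the rationalisation: one must multiply by the conjugate of the numerator rather than attempting to upper-bound $\sqrt{2D+4\beta\epsilon}-\sqrt{2D}$ loosely, otherwise the tight form $\sqrt{\frac{D}{2\epsilon^2}+\frac{\beta}{\epsilon}}+\sqrt{\frac{D}{2\epsilon^2}}$ is lost. Beyond this algebraic identity, there is no technical obstacle, since the convexity/smoothness hypotheses and step-size choice are inherited verbatim from Theorem \ref{theorem:convergence_rate} and the unbiasedness and variance bound on $\hat{\bm X}(t)$ are what make Corollary \ref{corollary:sum_rate_distortion_function} applicable at every iteration.
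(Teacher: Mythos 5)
Your proposal matches the paper's proof: the paper also derives (\ref{equ:number_of_iterations}) by inverting the convergence bound $A\sqrt{2D/T}+\beta A^2/T\leq\epsilon$ from Theorem~\ref{theorem:convergence_rate}, and obtains (\ref{equ:lower_bound_communication}) directly from Corollary~\ref{corollary:sum_rate_distortion_function}. The only difference is cosmetic: the paper states the equivalence $A\sqrt{2D/T}+\beta A^2/T\leq\epsilon \Leftrightarrow T\geq A^2(\sqrt{D/2\epsilon^2+\beta/\epsilon}+\sqrt{D/2\epsilon^2})^2$ without showing the quadratic-in-$1/\sqrt{T}$ substitution and conjugate rationalisation that you spell out, and it leaves the factor $P$ (block length times per-symbol rate) implicit.
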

\begin{proof}
To guarantee the convergence bound $\mathbb{E}\left[F\left(\frac{1}{T}\sum_{t=1}^T{\bm{w}(t)}\right)\right]-\min_{\bm{w}\in\mathbb{R}^P}{F(\bm{w})}\leq\epsilon$, from Theorem \ref{theorem:convergence_rate}, we should have
\begin{align}
A\sqrt{\frac{2D}{T}}+\frac{\beta A^2}{T}\leq\epsilon\Leftrightarrow T\geq A^2\left(\sqrt{\frac{D}{2\epsilon^2}+\frac{\beta}{\epsilon}}+\sqrt{\frac{D}{2\epsilon^2}}\right)^2.
\end{align}
We can obtain (\ref{equ:lower_bound_communication}) directly by Corollary \ref{corollary:sum_rate_distortion_function}.
\end{proof}
\begin{remark}
The minimum achievable communication cost at iteration $t$ can be potentially achieved by combining vector quantization and Slepian-Wolf coding \cite{4494710}.
For a distortion $D$, the communication cost per iteration can be reduced by distributed source coding without loss of convergence rate.
This is because distributed source coding, such as Slepian-Wolf coding, is lossless, and using it for a quantization scheme does not increase the distortion of the estimator \cite{4494710}.
\end{remark}
\begin{remark}
We can obtain the total communication cost by combining the number of iterations (\ref{equ:number_of_iterations}) and the communication cost at each iteration $t$ (\ref{equ:lower_bound_communication}).
Moreover, given the gradient statistics $\sigma_X^2(t)$ and $\sigma_N^2(t)$ at each iteration $t$, Corollary \ref{corollary:convex_communication_efficiency} can guide the selection of the distortion $D$ in terms of the total communication cost.
In the case of static gradient statistics, it can be found that the monotonicity of the total communication bits depends on $\frac{\beta\epsilon}{\sigma_X^2}$.
When $\frac{\beta\epsilon}{\sigma_X^2}>0.5$, the total communication bits monotonically decrease with the distortion $D$.
In this case, we should choose a low quantization level for quantization schemes to increase the distortion $D$ and hence achieve a low total communication bits.
\end{remark}

In practice, however, the gradient statistics $\sigma_X^2(t)$ and $\sigma_N^2(t)$ shown in Fig.~\ref{fig:statistic_over_iteration} are time-varying and the loss can be a non-convex function.
Hence, we show the effect of distortion on the total communication cost by numerical results.
Note that the sum rate mainly depends on the ratio of distortion $D$ to $\sigma_N^2/K$, and we have $D\geq\sigma_N^2/K$.
Hence, we define a new non-negative parameter $\rho\triangleq\frac{DK}{\sigma_N^2}-1$.
When $\rho>0$, it is the ratio of the variance of quantization noise and the variance of gradient noise, and when $\rho=0$, it indicates error-free transmission of gradients.
\begin{figure}[t]
\begin{centering}
\vspace{-0.2cm}
\includegraphics[scale=.50]{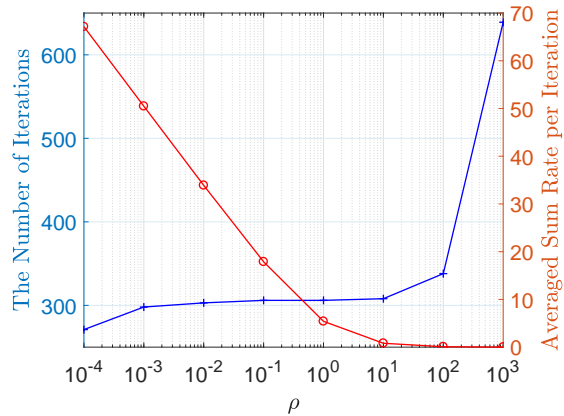}
\vspace{-0.1cm}
 \caption{\small{The number of iterations and averaged sum rate per iteration over $\rho$, where the MLP model achieves loss $1.70$ (accuracy 90\%) on MNIST dataset and the number of worker nodes is 10.}}
 \label{fig:iteration_rate_over_distortion}
\end{centering}
\vspace{-0.3cm}
\end{figure}

Fig.~\ref{fig:iteration_rate_over_distortion} shows the number of iterations and averaged sum rate per iteration over $\rho$, where the MLP model achieves a loss of $1.70$ and accuracy of 90\% on the MNIST dataset.
At each iteration, we add independent quantization noise to each local gradient such that the distortion satisfies $D(t)=\frac{\sigma_N^2(t)}{K}(1+\rho)$.
For each $\rho$, the model is trained until the loss is less than $1.70$ (model converges in this case), and the averaged sum rate per iteration is calculated.
It is observed that the required number of iterations grows slowly over $\rho$ when $\rho<100$, which indicates that moderate variance of quantization noise reduces the convergence rate very slightly.
However, the model convergence rate drops sharply when $\rho>10^2$ and even fails to converge when $\rho>10^3$.
It is also observed that the average sum rate decreases over $\rho$.
Hence, the optimal value of $\rho$ should be selected between $10^1$ and $10^2$ so that both the sum rate and the number of iterations are low.
In practice, the optimal $\rho$ corresponds to the quantization schemes with large distortion such as 1-bit SGD \cite{seide20141} and signSGD \cite{pmlr-v80-bernstein18a}.

\section{Conclusion}
This paper studied the fundamental limit of communication cost of model aggregation in distributed learning from a rate-distortion perspective.
We formulated the model aggregation based on gradient distribution of distributed learning and justified our assumptions by experiments on real-world datasets.
We derived the optimal rate region results and the sum-rate-distortion function for the model aggregation problem.
Based on the derived sum-rate-distortion function and the gradient statistics of real-world datasets, we then numerically analyzed the optimal communication cost at each iteration and the total communication cost to achieve a target model convergence bound.
Numerical results demonstrated that the communication gain from exploiting the correlation between worker nodes is significant and a high variance gradient estimator can achieve a low total communication cost.

Among open problems, since this paper focuses on the theoretical minimum communication cost for model aggregation, future works can study practical distributed source coding to achieve the theoretical gain of model aggregation.
Moreover, the theoretical minimum communication cost of the biased gradient estimator is also an important problem that remains to solve.

\begin{appendices}
\section{Achievability of Theorem \ref{theorem:rate_region}}\label{proof:one_layer_achievability}
For the achievability proof, we adopt the Berger-Tung scheme \cite{10016434852} but design an unbiased estimator at receiver.
\begin{lemma}[Berger-Tung inner bound]
If we can find auxiliary random variables $U_1,U_2,...,U_K$ such that
\begin{align}
U_1,U_2,...,U_K|Y_1,Y_2,...,Y_K\sim p(u_1|y_1)p(u_2|y_2)...p(u_K|y_K)
\end{align}
and decoding function $\hat{X}(U_1,U_2,...,U_K)$ such that
\begin{align}
\mathbb{E}(X-\hat{X}(U^K))^2\leq D,
\end{align}
then the following rate region is achievable
\begin{align}
\sum_{k\in\mathcal{A}}{R_k}\geq I(\bm{Y}_\mathcal{A};\bm{U}_\mathcal{A}|\bm{U}_{\mathcal{A}^c}),\forall\mathcal{A}\subseteq\{1,2,...,K\}.
\end{align}
\end{lemma}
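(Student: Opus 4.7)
My plan is the classical Berger--Tung random-coding argument: two-level codebooks with per-user binning, joint-typicality encoding (covering) at each encoder, and joint-typicality decoding across bin indices at the receiver. Because $\hat{X}$ is a fixed symbol-by-symbol function of $(U_1,\ldots,U_K)$, once the codewords are correctly decoded, the distortion constraint will follow from the typical average lemma applied to the joint distribution $p(x,y_1,\ldots,y_K)\prod_k p(u_k|y_k)$.

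Codebook generation and encoding: Fix $\epsilon>0$ small. For each $k$, generate $2^{nR_k'}$ codewords $u_k^n(m_k)$ i.i.d.\ under $\prod_{i=1}^n p(u_{k,i})$ and partition them uniformly at random into $2^{nR_k}$ bins. Encoder $k$, on observing $y_k^n$, picks $m_k$ with $(y_k^n,u_k^n(m_k))\in\mathcal{T}_\epsilon^{(n)}$ and transmits its bin index; by the covering lemma such an $m_k$ exists with probability tending to one provided $R_k'>I(Y_k;U_k)+\delta(\epsilon)$. The conditional factorization $p(u_1,\ldots,u_K|y_1,\ldots,y_K)=\prod_k p(u_k|y_k)$ guarantees that the collected codewords $(u_1^n(m_1),\ldots,u_K^n(m_K))$ are, with high probability, jointly typical according to the intended product test channel.

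Decoding and subset error counting: The decoder looks for a unique tuple $(\hat m_1,\ldots,\hat m_K)$ consistent with the received bins and with $(u_1^n(\hat m_1),\ldots,u_K^n(\hat m_K))\in\mathcal{T}_\epsilon^{(n)}$. For every nonempty $\mathcal{A}\subseteq\mathcal{K}$ I would bound the probability of an error in which exactly the indices in $\mathcal{A}$ are wrong. There are at most $\prod_{k\in\mathcal{A}}2^{n(R_k'-R_k)}$ wrong candidates, and by the packing lemma each passes the typicality test with probability at most $2^{-n[\sum_{k\in\mathcal{A}}H(U_k)-H(U_\mathcal{A}|U_{\mathcal{A}^c})-\delta(\epsilon)]}$. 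The error probability then vanishes whenever
\begin{align*}
\sum_{k\in\mathcal{A}}(R_k'-R_k) < \sum_{k\in\mathcal{A}}H(U_k)-H(U_\mathcal{A}|U_{\mathcal{A}^c})-\delta(\epsilon),
\end{align*}
for every nonempty $\mathcal{A}$. Substituting $R_k'=I(Y_k;U_k)+\delta'(\epsilon)$, cancelling $\sum_{k\in\mathcal{A}}H(U_k)$, and invoking the Markov implications $H(U_\mathcal{A}|Y_\mathcal{A})=\sum_{k\in\mathcal{A}}H(U_k|Y_k)$ together with $H(U_\mathcal{A}|Y_\mathcal{A},U_{\mathcal{A}^c})=H(U_\mathcal{A}|Y_\mathcal{A})$ --- both direct consequences of the product factorization in the hypothesis --- collapses the constraint to
\begin{align*}
\sum_{k\in\mathcal{A}}R_k > H(U_\mathcal{A}|U_{\mathcal{A}^c})-H(U_\mathcal{A}|Y_\mathcal{A},U_{\mathcal{A}^c})=I(Y_\mathcal{A};U_\mathcal{A}|U_{\mathcal{A}^c}),
\end{align*}
which is the claimed rate region after letting $\epsilon\to 0$.

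Distortion and main obstacle: Conditional on the vanishing encoding and decoding error events, $(X^n,U_1^n,\ldots,U_K^n)\in\mathcal{T}_\epsilon^{(n)}$, so the typical average lemma yields $\tfrac{1}{n}\sum_{i=1}^n\mathbb{E}[(X_i-\hat{X}(U_{1,i},\ldots,U_{K,i}))^2]\to\mathbb{E}[(X-\hat{X}(U^K))^2]\leq D$, with the residual error-event contribution absorbed by choosing a bounded default reconstruction off the typical set. The main obstacle I expect is the combinatorial error analysis over all $2^K-1$ nonempty subsets $\mathcal{A}$: the count of wrong candidates factorizes trivially because the codebooks are independent, but the typicality probability only simplifies into the clean conditional-mutual-information form after repeated use of the product factorization $\prod_k p(u_k|y_k)$, and aligning these two manipulations uniformly in $\mathcal{A}$ is what converts a list of $2^K-1$ inequalities into the stated rate region.
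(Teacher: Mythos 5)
The paper does not prove this lemma at all: it is quoted verbatim as the Berger--Tung inner bound and invoked by citation, so there is no in-paper argument to compare against. Your reconstruction of the classical random-coding/binning proof is therefore the right move, and its architecture (covering at rates $R_k'>I(Y_k;U_k)$, binning to $R_k$, subset-wise packing analysis, and the algebra collapsing $\sum_{k\in\mathcal{A}}(R_k'-R_k)<\sum_{k\in\mathcal{A}}H(U_k)-H(U_\mathcal{A}|U_{\mathcal{A}^c})$ to $\sum_{k\in\mathcal{A}}R_k>I(\bm{Y}_\mathcal{A};\bm{U}_\mathcal{A}|\bm{U}_{\mathcal{A}^c})$ via $H(U_\mathcal{A}|Y_\mathcal{A})=\sum_{k\in\mathcal{A}}H(U_k|Y_k)$ and the Markov chain $U_\mathcal{A}-Y_\mathcal{A}-U_{\mathcal{A}^c}$) is exactly the standard one and is correct.

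There is, however, one step you assert that is precisely the technical crux and does not follow from what you say. You claim that the factorization $\prod_k p(u_k|y_k)$ ``guarantees'' that the independently selected codewords $(u_1^n(m_1),\ldots,u_K^n(m_K))$ are jointly typical with $(x^n,y_1^n,\ldots,y_K^n)$. Each $u_k^n(m_k)$ is only guaranteed to be jointly typical with $y_k^n$, and it was drawn from a codebook generated independently of $y_k^n$, so its conditional law given $y_k^n$ is \emph{not} $\prod_i p(u_{k,i}|y_{k,i})$; passing from per-encoder typicality to joint typicality of the whole tuple requires the Markov lemma (applied iteratively across encoders, using $U_k-Y_k-(X,Y_{\mathcal{K}\setminus k},U_{\mathcal{K}\setminus k})$), and this is the step where Berger--Tung-type proofs are genuinely delicate. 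Relatedly, in the subset error analysis the bin indices are themselves functions of the codebooks, so the packing-lemma application needs the usual care about this dependence rather than treating the wrong candidates as independent of the event being tested. Finally, since the lemma is applied in the paper to Gaussian sources with squared-error distortion, the typical-average-lemma step needs an extra argument for the unbounded distortion measure (truncation/discretization or a direct bound on the atypical-event contribution); a ``bounded default reconstruction'' alone does not control $\mathbb{E}\|X^n-\hat{X}^n\|^2$ on the error event. None of this changes the conclusion, but these are the places where your sketch would need to be filled in to count as a proof rather than a citation of the standard result.
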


Now, let us apply the Berger-Tung scheme into our problem.
For each worker node $k$, we define the auxiliary random variable $U_k=Y_k+V_k$, where each $V_k\sim\mathcal{N}(0,\sigma_{V_{k}}^2)$ is independently distributed and independent of $Y_k$ and $X$.
Parameters $\{\sigma_{V_{k}}^2\}$ are determined in terms of the given distortion $D$.
After recovering $U^{nK}$, the decoder reconstructs $\hat{X}^n$ by applying the following weighted averaging function component-wise
\begin{align}
\hat{X}=\psi(U^K)\triangleq \sum_{k=1}^K{\alpha_{k}U_{k}},
\end{align}
where $\alpha_{k}=\frac{(\sigma_{N_{k}}^2+\sigma_{V_{k}}^2)^{-1}}{\sum_{k=1}^K{(\sigma_{N_{k}}^2+\sigma_{V_{k}}^2)^{-1}}}$.
Note that $\hat{X}$ is an unbiased estimator of $X$ and we set the variance of $\hat{X}$ equal to the given distortion $D$, which is given by
\begin{align}
D=\left(\sum_{k=1}^K\frac{1}{\sigma_{N_{k}}^2+\sigma_{V_{k}}^2}\right)^{-1}.
\label{proof_direct:D}
\end{align}
Let us define
\begin{align}
r_{k}\triangleq I(Y_{k};U_{k}|X)=\frac{1}{2}\log{(1+\frac{\sigma_{N_{k}}^2}{\sigma_{V_{k}}^2})},k=1,2,...,K.
\end{align}
Parameter $r_{k}$ can be interpreted as the rate of worker node $k$ for quantizing its gradient noise.
We will use $r_{k}$'s as the parameters instead of $\sigma_{V_{k}}^2$ in the following.
Note that for any choice of $(r_{k}\geq 0,k=1,2,...,K)$, we can find a corresponding $(\sigma_{V_{k}}^2\geq 0,k=1,2,...,K)$ and therefore, a set of auxiliary random variables.
Then we can rewrite (\ref{proof_direct:D}) in terms of $r_{k}$'s as
\begin{align}
\label{proof_direct:distortion}
\frac{1}{D}=\sum_{k=1}^K{\frac{1-\exp(-2r_{k})}{\sigma_{N_{k}}^2}},
\end{align}
as desired.

From the Berger-Tung inner bound, $(R_1,...,R_K)$ are achievable if for all non-empty $\mathcal{A}\subseteq\{1,2,...,K\}$ satisfying
\begin{align}
\sum_{k\in\mathcal{A}}{R_k}&\geq I(\bm{Y}_\mathcal{A};\bm{U}_\mathcal{A}|\bm{U}_{\mathcal{A}^c})\\
&=I(\bm{Y}_\mathcal{A},X;\bm{U}_\mathcal{A}|\bm{U}_{\mathcal{A}^c})\\
&=I(X;\bm{U}_\mathcal{A}|\bm{U}_{\mathcal{A}^c})+I(\bm{Y}_\mathcal{A};\bm{U}_\mathcal{A}|X)\\
&=I(X;\bm{U}_\mathcal{A}|\bm{U}_{\mathcal{A}^c})+\sum_{k\in\mathcal{A}}{r_{k}}\\
&=h(X|\bm{U}_{\mathcal{A}^c})-h(X|\bm{U})+\sum_{k\in\mathcal{A}}{r_{k}}\\
&=-\frac{1}{2}\log{\left(\frac{1}{\sigma_{X}^2}+\sum_{k\in\mathcal{A}^c}{\frac{1}{\sigma_{N_{k}}^2+\sigma_{V_{k}}^2}}\right)}+\frac{1}{2}\log{(\frac{1}{\sigma_{X}^2}+\sum_{k\in\mathcal{K}}\frac{1}{\sigma_{N_{k}}^2+\sigma_{V_{k}}^2})}+\sum_{k\in\mathcal{A}}{r_{k}}\\
&=-\frac{1}{2}\log{\left(\frac{1}{\sigma_{X}^2}+\sum_{k\in\mathcal{A}^c}{\frac{1-\exp(-2r_{k})}{\sigma_{N_{k}}^2}}\right)}+\frac{1}{2}\log{(\frac{1}{\sigma_{X}^2}+\frac{1}{D})}+\sum_{k\in\mathcal{A}}{r_{k}}.\label{proof_direct:rate_region}
\end{align}
Equations (\ref{proof_direct:distortion}) and (\ref{proof_direct:rate_region}) together conclude the achievable proof.

\section{Converse of Theorem \ref{theorem:rate_region}}\label{proof:one_layer_converse}
Our converse proof is inspired by Oohama's converse proof in \cite{669162} but is not straightforward and cannot be derived directly from traditional (biased) CEO.
The existence of unbiased estimator must be guaranteed in the unbiased CEO problem, which is nontrivial since unbiased estimator might not always exist for arbitrary descriptions.

Suppose we achieve an unbiased estimator $\hat{X}^n$ with distortion $D=D(X^n,\hat{X}^n)$.
Let $\bm{C}_\mathcal{K}=(C_1,C_2,...,C_K)$ denote all the messages produced by all the worker nodes after observing $n$-length sequences.
Let us define
\begin{align}
r_{k}\triangleq\frac{1}{n}I(Y_{k}^n;C_k|X^n).
\end{align}
For any $\mathcal{A}\subseteq\{1,2,...,K\}$, we have
\begin{align}
\sum_{k\in\mathcal{A}}{R_k}&\geq\sum_{k\in\mathcal{A}}{\frac{1}{n}H(C_k)}\\
&\geq\frac{1}{n}H(\bm{C}_\mathcal{A})\\
&\geq\frac{1}{n}H(\bm{C}_\mathcal{A}|\bm{C}_{\mathcal{A}^c})\label{proof_converse:rate_region_line2}\\
&\geq\frac{1}{n}I(\bm{Y}_\mathcal{A}^n;\bm{C}_\mathcal{A}|\bm{C}_{\mathcal{A}^c})\\
&=\frac{1}{n}I(X^n,\bm{Y}_\mathcal{A}^n;\bm{C}_\mathcal{A}|\bm{C}_{\mathcal{A}^c})\label{proof_converse:rate_region_line4}\\
&=\frac{1}{n}I(X^n;\bm{C}_\mathcal{A}|\bm{C}_{\mathcal{A}^c})+\frac{1}{n}I(\bm{Y}_\mathcal{A}^n;\bm{C}_\mathcal{A}|\bm{C}_{\mathcal{A}^c},X^n)\label{proof_converse:rate_region_line5}\\
&=\frac{1}{n}I(X^n;\bm{C}_\mathcal{A}|\bm{C}_{\mathcal{A}^c})+\frac{1}{n}I(X^n;\bm{C}_{\mathcal{A}^c})-\frac{1}{n}I(X^n;\bm{C}_{\mathcal{A}^c})+\frac{1}{n}I(\bm{Y}_\mathcal{A}^n;\bm{C}_\mathcal{A}|X^n)\label{proof_converse:rate_region_line6}\\
&=\frac{1}{n}I(X^n;\bm{C}_\mathcal{K})-\frac{1}{n}I(X^n;\bm{C}_{\mathcal{A}^c})+\sum_{k\in\mathcal{A}}\frac{1}{n}I(Y_k^n;C_k|X^n)\label{proof_converse:rate_region_line7}\\
&=\frac{1}{n}I(X^n;\bm{C}_\mathcal{K})-\frac{1}{n}I(X^n;\bm{C}_{\mathcal{A}^c})+\sum_{k\in\mathcal{A}}r_{k}\label{proof_converse:converse_origin},
\end{align}
where (\ref{proof_converse:rate_region_line2}) follows from the fact that conditioning reduces entropy, (\ref{proof_converse:rate_region_line4}) follows from the fact that $X^n-\bm{Y}_\mathcal{A}^n-\bm{C}_\mathcal{A}$ forms a Markov chain, (\ref{proof_converse:rate_region_line5}) follows from the chain rule for mutual information, (\ref{proof_converse:rate_region_line6}) follows from the fact that $(\bm{Y}_\mathcal{A}^n,\bm{C}_\mathcal{A})-X^n-\bm{C}_{\mathcal{A}^c}$ forms a Markov chain, (\ref{proof_converse:rate_region_line7}) follows from the chain rule for mutual information and the fact that $Y_k^n-X^n-\bm{C}_{\mathcal{K}\setminus\{k\}}$ forms a Markov chain.
In the following, we bound the first two terms in (\ref{proof_converse:converse_origin}), respectively.
In order to bound the first term in (\ref{proof_converse:converse_origin}), we first introduce the following lemma.
\begin{lemma}
\label{lemma:first_term}
Let $\hat{X}^n$ be any unbiased estimator of $X^n$ with distortion $D(X^n,\hat{X}^n)$.
We have
\begin{align}
\frac{1}{n}I(X^n;\hat{X}^n)\geq\frac{1}{2}\log(1+\frac{\sigma_X^2}{D(X^n,\hat{X}^n)}).
\end{align}
\end{lemma}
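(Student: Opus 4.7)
The plan is to write $I(X^n;\hat{X}^n) = h(X^n) - h(X^n|\hat{X}^n)$ and bound each piece. Since $\{X(i)\}_{i=1}^n$ is i.i.d.\ Gaussian with variance $\sigma_X^2$, we immediately have $h(X^n) = \frac{n}{2}\log(2\pi e \sigma_X^2)$. For the conditional entropy I would single-letterize by applying ``conditioning reduces entropy'' twice, first to the chain-rule expansion along $i$ and then to drop the other coordinates of $\hat{X}^n$, to obtain $h(X^n|\hat{X}^n) \leq \sum_{i=1}^n h(X(i)|\hat{X}(i))$. For each coordinate, the Gaussian maximum-entropy principle combined with Jensen's inequality on the concave log yields
\begin{align*}
h(X(i)|\hat{X}(i)) \leq \tfrac{1}{2}\log\!\big(2\pi e \cdot \mathrm{mmse}_i\big), \quad \mathrm{mmse}_i \triangleq \mathbb{E}\big[\mathrm{Var}(X(i)|\hat{X}(i))\big].
\end{align*}

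The crucial step---and the only place where the unbiasedness hypothesis (\ref{equ:unbiased_definition}) is actually used---is to upper bound this MMSE by the corresponding linear MMSE. Set $\epsilon_i \triangleq \hat{X}(i)-X(i)$ and $D_i \triangleq \mathbb{E}[\epsilon_i^2]$, so that $\frac{1}{n}\sum_i D_i = D(X^n,\hat{X}^n)$. The identity $\mathbb{E}[\hat{X}^n|X^n]=X^n$ forces $\mathbb{E}[\epsilon_i|X(i)]=0$, hence $\mathbb{E}[\epsilon_i]=0$ and $\mathbb{E}[X(i)\epsilon_i]=0$. Viewing $\hat{X}(i)=X(i)+\epsilon_i$ as a signal-plus-orthogonal-noise observation, the classical LMMSE formula delivers $\mathrm{mmse}_i \leq \sigma_X^2 D_i/(\sigma_X^2+D_i)$. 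Chaining the bounds gives
\begin{align*}
\frac{1}{n}I(X^n;\hat{X}^n) \geq \frac{1}{2n}\sum_{i=1}^n \log\!\Big(1+\frac{\sigma_X^2}{D_i}\Big).
\end{align*}

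To reach the stated single-letter form, I would invoke Jensen's inequality on the map $D\mapsto\log(1+\sigma_X^2/D)$, which is convex on $(0,\infty)$ (a direct second-derivative check), yielding $\frac{1}{n}\sum_i \log(1+\sigma_X^2/D_i) \geq \log\!\big(1+\sigma_X^2/D(X^n,\hat{X}^n)\big)$, which is the claim. The main obstacle is the LMMSE step: without unbiasedness one can only exploit the trivial $\mathrm{mmse}_i \leq D_i$, which produces the looser classical bound $\frac{1}{2}\log(\sigma_X^2/D)$ used in \cite{1365154}; the extra ``$1+$'' in the argument of the logarithm comes precisely from orthogonality of $X(i)$ and $\epsilon_i$, and that orthogonality is exactly what the unbiased-estimator constraint provides. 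Everything else---the single-letterization, Gaussian max-entropy upper bound, and the final Jensen step---is standard manipulation.
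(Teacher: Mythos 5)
Your proposal is correct and follows the paper's proof essentially step for step: the same entropy decomposition, the same single-letterization by twice applying ``conditioning reduces entropy,'' the same Gaussian max-entropy plus Jensen step to reach the per-coordinate MMSE, the same use of unbiasedness to replace the MMSE by the LMMSE $\sigma_X^2 D_i/(\sigma_X^2+D_i)$ (the paper carries this out by expanding $\min_a\mathbb{E}(X-a\hat X)^2$ via conditional expectation given $X$, while you invoke orthogonality of $X(i)$ and $\epsilon_i$ directly, but these are the same calculation), and the same final Jensen step on the convex map $D\mapsto\log(1+\sigma_X^2/D)$. You also correctly identify where unbiasedness enters and why dropping it recovers only the looser bound $\tfrac{1}{2}\log(\sigma_X^2/D)$.
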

\begin{proof}
We have
\begin{align}
\frac{1}{n}I(X^n;\hat{X}^n)&=\frac{1}{n}h(X^n)-\frac{1}{n}h(X^n|\hat{X}^n)\\
&=\frac{1}{n}h(X^n)-\frac{1}{n}\sum_{t=1}^n{h(X(t)|X^{t-1}\hat{X}^n)}\\
&\geq\frac{1}{n}h(X^n)-\frac{1}{n}\sum_{t=1}^n{h(X(t)|\hat{X}(t))}\\
&\geq\frac{1}{2}\log2\pi e\sigma_X^2-\frac{1}{n}\sum_{t=1}^n{\frac{1}{2}\log2\pi e\sigma_{X(t)|\hat{X}(t)}^2}\label{proof_lemma:middle}.
\end{align}
The conditional variance of variable $X$ given the unbiased estimator $\hat{X}$ is given by
\begin{align}
\sigma_{X|\hat{X}}^2&=\mathbb{E}\left[X-\mathbb{E}(X|\hat{X})\right]^2\\
&\leq\min_a{\mathbb{E}(X-a\hat{X})^2}\\
&=\min_a{\mathbb{E}\left[\mathbb{E}\left((X-a\hat{X})^2|X\right)\right]}\\
&=\min_a{\mathbb{E}\left[\mathbb{E}^2(X-a\hat{X}|X)+\mbox{Var}(X-a\hat{X}|X)\right]}\\
&=\min_a{\mathbb{E}\left\{(X-aX)^2+\mathbb{E}\left[(X-a\hat{X})-\mathbb{E}(X-a\hat{X}|X)\right]^2\right\}}\label{proof_lemma:conditional_expectation1}\\
&=\min_a{\left[(1-a)^2\sigma_X^2+a^2\mathbb{E}(X-\hat{X})^2\right]}\label{proof_lemma:conditional_expectation2}\\
&=(\frac{1}{\sigma_X^2}+\frac{1}{\mathbb{E}(X-\hat{X})^2})^{-1}\label{proof_lemma:variance}.
\end{align}
where (\ref{proof_lemma:conditional_expectation1}) and (\ref{proof_lemma:conditional_expectation2}) follow from the fact that $\mathbb{E}(\hat{X}|X)=X$.
Substituting (\ref{proof_lemma:variance}) to (\ref{proof_lemma:middle}), we have
\begin{align}
\frac{1}{n}I(X^n;\hat{X}^n)&\geq\frac{1}{2}\log2\pi e\sigma_X^2+\frac{1}{n}\sum_{t=1}^n{\frac{1}{2}\log\frac{1}{2\pi e}(\frac{1}{\sigma_X^2}+\frac{1}{\mathbb{E}(X(t)-\hat{X}(t))^2})}\\
&\geq\frac{1}{2}\log2\pi e\sigma_X^2+\frac{1}{2}\log\frac{1}{2\pi e}(\frac{1}{\sigma_X^2}+\frac{1}{D(X^n,\hat{X}^n)})\label{proof_lemma:jasen_inequality}\\
&=\frac{1}{2}\log(1+\frac{\sigma_X^2}{D(X^n,\hat{X}^n)}),
\end{align}
where (\ref{proof_lemma:jasen_inequality}) follows from Jensen's inequality.
The proof of Lemma \ref{lemma:first_term} is completed.
\end{proof}

Compared to the lower bound $\frac{1}{2}\log(\frac{\sigma_X^2}{D})$ obtained in \cite{1365154}, Eq.(8), Lemma \ref{lemma:first_term} uses the unbiased estimator to tighten it to $\frac{1}{2}\log(1+\frac{\sigma_X^2}{D})$.
Based on Lemma 3, we have a simple lower-bound for the first term in (\ref{proof_converse:converse_origin}).
\begin{align}
\frac{1}{n}I(X^n;\bm{C}_\mathcal{K})&\geq\frac{1}{n}I(X^n;\hat{X}^n)\label{proof_converse:first_term_line1}\\
&\geq\frac{1}{2}\log(1+\frac{\sigma_{X}^2}{D(X^n,\hat{X}^n)})\label{proof_converse:first_term_line2}\\
&=\frac{1}{2}\log(1+\frac{\sigma_{X}^2}{D})\label{proof_converse:first_term},
\end{align}
where (\ref{proof_converse:first_term_line1}) follows from the data-processing inequality and (\ref{proof_converse:first_term_line2}) follows from Lemma \ref{lemma:first_term}.

We introduce the following lemma to bound the second term in (\ref{proof_converse:converse_origin}), whose proof is given in \cite{1365154}, Lemma 3.1.
\begin{lemma}
\label{lemma:second_term}
Let $\mathcal{A}\subseteq\{1,2,...,K\}$.
Then we have
\begin{align}
\frac{1}{\sigma_{X}^2}\exp\left(\frac{2}{n}I(X^n;\bm{C}_\mathcal{A})\right)\leq\frac{1}{\sigma_{X}^2}+\sum_{k\in\mathcal{A}}\frac{1-\exp(-2r_{k})}{\sigma_{N_{k}}^2}.
\label{proof_converse:second_term}
\end{align}
\end{lemma}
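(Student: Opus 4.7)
The plan is to convert the claim into an equivalent lower bound on the conditional differential entropy $h(X^n\mid\bm{C}_{\mathcal{A}})$ and then establish it via the conditional entropy power inequality (EPI), following Oohama's technique. Using $h(X^n)=\tfrac{n}{2}\log 2\pi e\sigma_X^2$ and $I(X^n;\bm{C}_{\mathcal{A}})=h(X^n)-h(X^n\mid\bm{C}_{\mathcal{A}})$, the target inequality \eqref{proof_converse:second_term} is equivalent to
\begin{align*}
\exp\!\Bigl(\tfrac{2}{n}h(X^n\mid\bm{C}_{\mathcal{A}})\Bigr)\;\ge\;2\pi e\left(\tfrac{1}{\sigma_X^2}+\sum_{k\in\mathcal{A}}\tfrac{1-\exp(-2r_{k})}{\sigma_{N_{k}}^2}\right)^{-1},
\end{align*}
whose right-hand side is exactly $2\pi e$ times the MMSE of estimating $X$ from $|\mathcal{A}|$ parallel Gaussian observations with effective noise variances $\sigma_{N_k}^2/(1-\exp(-2r_k))$. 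This recasts the target as an EPI-style bound on a conditional entropy power.

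First, I would translate each rate budget $r_k$ into an effective-noise identity. Since $Y_k^n=X^n+N_k^n$ with $N_k^n$ a white Gaussian vector independent of $X^n$,
\begin{align*}
nr_{k}\;=\;I(Y_{k}^n;C_k\mid X^n)\;=\;h(N_{k}^n)-h(N_{k}^n\mid C_k,X^n),
\end{align*}
so the conditional entropy power of $N_{k}^n$ given $(C_k,X^n)$ is exactly $\sigma_{N_{k}}^2\exp(-2r_k)$. I would also record the conditional-independence structure: for each $k$, $C_k\leftrightarrow Y_k^n\leftrightarrow X^n$ is a Markov chain, and conditional on $X^n$ the descriptions $\{C_k\}_{k\in\mathcal{A}}$ are mutually independent because the $\{N_k^n\}_{k\in\mathcal{A}}$ are. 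These two facts are precisely the ingredients needed to invoke the EPI.

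The core and hardest step is then an inductive application of the (conditional) EPI. For $|\mathcal{A}|=1$, applying the scalar EPI to $Y_k^n=X^n+N_k^n$ conditional on $C_k$, together with the effective-noise identity, produces the base case. For $|\mathcal{A}|>1$, I would peel off the descriptions one at a time: at step $k$, apply the EPI to $Y_k^n=X^n+N_k^n$ while conditioning on $(\bm{C}_{\mathcal{A}\setminus\{k\}},X^n)$ so that $N_k^n$ remains independent of the conditioning variables, and then remove $X^n$ from the conditioning using the chain rule together with the Gaussian-maximizes-entropy bound. Solving the resulting fixed-point relation among the successive conditional entropy powers yields the additive form $\tfrac{1}{\sigma_X^2}+\sum_{k\in\mathcal{A}}(1-\exp(-2r_k))/\sigma_{N_k}^2$ on the right-hand side. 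The main obstacle is precisely the dependence between $C_k$ and $N_k^n$: a naive EPI that conditions directly on $\bm{C}_{\mathcal{A}}$ fails because $N_k^n$ is no longer independent of the conditioning, and the conditioning must be augmented with $X^n$ at each step to restore the independence required by the EPI. Rearranging the resulting conditional-entropy-power bound and dividing by $\sigma_X^2$ then delivers \eqref{proof_converse:second_term}.
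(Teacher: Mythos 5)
The paper offers no proof of this lemma at all: it states it and defers entirely to reference \cite{1365154}, Lemma 3.1. So there is no ``paper's own proof'' to match; your proposal must stand or fall on its own. Your first two steps are correct and useful: the reduction to a lower bound on the conditional entropy power $\exp\bigl(\tfrac{2}{n}h(X^n\mid\bm{C}_{\mathcal{A}})\bigr)$, and the exact identity $h(N_k^n\mid C_k,X^n)=h(N_k^n)-nr_k$ (which follows from $I(Y_k^n;C_k\mid X^n)=h(N_k^n)-h(N_k^n\mid C_k,X^n)$ since $N_k^n\perp X^n$ and $Y_k^n-X^n$ is invertible given $X^n$). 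The phrase ``solving the resulting fixed-point relation'' also points at the right shape of the argument.

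However, the core step is incoherent as described, which is a genuine gap. You propose to ``apply the EPI to $Y_k^n=X^n+N_k^n$ while conditioning on $(\bm{C}_{\mathcal{A}\setminus\{k\}},X^n)$.'' But once $X^n$ is in the conditioning, $Y_k^n=X^n+N_k^n$ is simply $N_k^n$ plus a known shift, so $h(Y_k^n\mid X^n,\cdot)=h(N_k^n\mid X^n,\cdot)$ identically; there is no two-term sum of conditionally independent random vectors to which the EPI could apply, and no ``peeling'' happens. The proof actually requires a construction you never introduce: the Gaussian MMSE decomposition $X^n=\hat{X}^n+E^n$ with $\hat{X}^n=\mathbb{E}[X^n\mid\bm{Y}_{\mathcal{A}}^n]=\sum_{k\in\mathcal{A}}a_kY_k^n$ and $E^n$ Gaussian of variance $\sigma_E^2=(1/\sigma_X^2+\sum_k 1/\sigma_{N_k}^2)^{-1}$, independent of $(\bm{Y}_{\mathcal{A}}^n,\bm{C}_{\mathcal{A}})$. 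One conditional EPI (given $\bm{C}_{\mathcal{A}}$) is applied to $X^n=\hat{X}^n+E^n$, and a second conditional EPI (given $(X^n,\bm{C}_{\mathcal{A}})$, where the $N_k^n$ \emph{are} mutually conditionally independent) is applied to $\hat{X}^n-(\sum_k a_k)X^n=\sum_k a_kN_k^n$, using your effective-noise identity term by term. The two are linked through the chain-rule identity $h(\hat{X}^n\mid\bm{C}_{\mathcal{A}})=h(X^n\mid\bm{C}_{\mathcal{A}})+h(\hat{X}^n\mid X^n,\bm{C}_{\mathcal{A}})-h(X^n\mid\hat{X}^n,\bm{C}_{\mathcal{A}})$, and solving the resulting self-referencing inequality for $\exp(\tfrac{2}{n}h(X^n\mid\bm{C}_{\mathcal{A}}))$ delivers exactly the claimed bound. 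Without introducing $\hat{X}^n$ and $E^n$ and the two-level conditioning, your sketch does not assemble into a proof.
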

Combining (\ref{proof_converse:first_term}), (\ref{proof_converse:second_term}) and (\ref{proof_converse:converse_origin}), for all $\mathcal{A}\subseteq\{1,2,...,K\}$, we have
\begin{align}
\sum_{k\in\mathcal{A}}{R_k}&\geq\frac{1}{2}\log(1+\frac{\sigma_{X}^2}{D})-\frac{1}{2}\log\left(1+\sigma_{X}^2\sum_{k\in{\mathcal{A}^c}}\frac{1-\exp(-2r_{k})}{\sigma_{N_{k}}^2}\right)+\sum_{k\in\mathcal{A}}r_{k}\\
&=-\frac{1}{2}\log\left(\frac{1}{\sigma_{X}^2}+\sum_{k\in\mathcal{A}^c}{\frac{1-\exp(-2r_{k})}{\sigma_{N_{k}}^2}}\right)+\frac{1}{2}\log{(\frac{1}{\sigma_{X}^2}+\frac{1}{D})}+\sum_{k\in\mathcal{A}}{r_{k}}.
\label{proof_converse:converse_replaced}
\end{align}
Substituting (\ref{proof_converse:first_term}) to (\ref{proof_converse:second_term}) with $\mathcal{A}=\mathcal{K}$, we have the condition
\begin{align}
\sum_{k=1}^K{\frac{1-\exp(-2r_{k})}{\sigma_{N_{k}}^2}}\geq\frac{1}{D}.
\label{proof_converse:r_constrain}
\end{align}
Along with the non-negativity constraints on $r_{k}$'s, (\ref{proof_converse:converse_replaced}) and (\ref{proof_converse:r_constrain}) define an outer bound for $\mathcal{R}_\star(D)$.
It is easy to show that replacing the inequality in (\ref{proof_converse:r_constrain}) with an equality will not change this outer bound.
Thus we have $\mathcal{R}_\star(D)\subseteq\mathcal{R}_N(D)$.

\section{Proof of Corollary \ref{corollary:rate_region}}\label{proof:corollary:rate_region}
We first prove the achievability.
Given a distortion $D$, for all $(D_1,D_2,\ldots,D_L)\in\mathbb{R}_+^L$ satisfying $\sum_{l=1}^L{D_l}=D$, we apply the achievable scheme in Theorem \ref{theorem:rate_region} to encode and reconstruct each layer $l$ separately such that the layer $l$'s distortion satisfies $D(X_l^n,\hat{X}_l^n)\leq D_l$.
Specifically, for each layer $l\in\{1,2,\ldots,L\}$ and its corresponding distortion $D_l$, for all $(r_{1,l},r_{2,l},\ldots,r_{K,l})\in\mathcal{F}_l(D_l)$, Theorem \ref{theorem:rate_region} states that all the rate $(R_{1,l},\ldots,R_{K,l})\in\mathcal{R}_l(r_{1,l},r_{2,l},\ldots,r_{K,l})$ is achievable, i.e., there exists encoders for layer $l$ with the rate $(R_{1,l},\ldots,R_{K,l})$ such that the layer $l$'s distortion satisfies $D(X_l^n,\hat{X}_l^n)\leq D_l$.
By combining all the layers' encoders, the rate of each worker node $k$, for $k=1,2,\ldots,K$, is $R_k=\sum_{l=1}^L{R_{k,l}}$, and we have
\begin{align}
D(\bm{X}^n,\bm{\hat{X}}^n)=\sum_{l=1}^L{D(X_l^n,\hat{X}_l^n)}\leq\sum_{l=1}^L{D_l}=D.
\end{align}
Hence, we have completed the achievable proof of this corollary.

Now, we prove the converse.
Suppose we achieve an unbiased gradient estimator $\bm{\hat{X}}^n$ with distortion $D=\sum_{l=1}^L{D_l}$, where $D_l=D(X_l^n,\hat{X}_l^n)$.
Let $\bm{C}_\mathcal{K}=(C_1,C_2,...,C_K)$ denote all the messages produced by all the worker nodes after observing $n$-length sequences.
Let us define
\begin{align}
r_{k,l}\triangleq\frac{1}{n}I(Y_{k,l}^n;C_k|X_l^n).
\end{align}
For any $\mathcal{A}\subseteq\{1,2,...,K\}$, we have
\begin{align}
\sum_{k\in\mathcal{A}}{R_k}&\geq\frac{1}{n}I(\bm{X}^n;\bm{C}_\mathcal{K})-\frac{1}{n}I(\bm{X}^n;\bm{C}_{\mathcal{A}^c})+\sum_{k\in\mathcal{A}}\frac{1}{n}I(\bm{Y}_k^n;C_k|\bm{X}^n)\label{proof_converse:rate_region_vector}\\
&=\sum_{l=1}^L{\left(\frac{1}{n}I(X_l^n;\bm{C}_\mathcal{K})-\frac{1}{n}I(X_l^n;\bm{C}_{\mathcal{A}^c})+\sum_{k\in\mathcal{A}}r_{k,l}\right)}\\
&\geq\sum_{l=1}^L\left(-\frac{1}{2}\log\left(\frac{1}{\sigma_{X_l}^2}+\sum_{k\in\mathcal{A}^c}{\frac{1-\exp(-2r_{k,l})}{\sigma_{N_{k,l}}^2}}\right)+\frac{1}{2}\log{(\frac{1}{\sigma_{X_l}^2}+\frac{1}{D_l})}+\sum_{k\in\mathcal{A}}{r_{k,l}}\right),
\label{proof_converse:converse_replaced_vector}
\end{align}
where (\ref{proof_converse:rate_region_vector}) is similar to the derivation of (\ref{proof_converse:rate_region_line7}), and (\ref{proof_converse:converse_replaced_vector}) follows from Lemma \ref{lemma:first_term} and Lemma \ref{lemma:second_term}.
For each $l=1,2,...,L$, similar to the derivation of (\ref{proof_converse:r_constrain}), we have the condition
\begin{align}
\sum_{k=1}^K{\frac{1-\exp(-2r_{k,l})}{\sigma_{N_{k,l}}^2}}\geq\frac{1}{D_l},l=1,2,...,L.
\label{proof_converse:r_constrain_vector}
\end{align}
It is easy to show that replacing the inequality in (\ref{proof_converse:r_constrain_vector}) with an equality will not change this outer bound.
Thus we have completed the proof of the converse part.

\section{Proof of Lemma \ref{lemma:optimal_R_kl}}\label{proof:lemma:optimal_R_kl}
Note that the equation (\ref{equ:optimal_R_kl}) is equivalent to $R_{k,l}=I(Y_{k,l};U_{k,l}|U_{k+1,l},...,U_{K,l})$.
Based on the definition of $\mathcal{R}_{l}(r_{1,l},...,r_{K,l})$, we can easily know that $(R_{1,l},...,R_{K,l})\in\mathcal{R}_{l}(r_{1,l},...,r_{K,l})$.
In the following, we prove that $(R_{1,l},R_{2,l},...,R_{K,l})$ is the optimal choice in $\mathcal{R}_{l}(r_{1,l},r_{2,l},...,r_{K,l})$ for minimizing $\sum_{k\in\mathcal{K}}{\alpha_k R_{k,l}}$.

To prove that $(R_{1,l},R_{2,l},...,R_{K,l})$ is the optimal choice in $\mathcal{R}_{l}(r_{1,l},r_{2,l},...,r_{K,l})$ is equal to prove that for all $(R_{1,l}^{'},R_{2,l}^{'},...,R_{K,l}^{'})\in\mathcal{R}_{l}(r_{1,l},r_{2,l},...,r_{K,l})$ we have $\sum_{k\in\mathcal{K}}{\alpha_k R_{k,l}}\leq\sum_{k\in\mathcal{K}}{\alpha_k R_{k,l}^{'}}$.
For all $(R_{1,l}^{'},R_{2,l}^{'},...,R_{K,l}^{'})\in\mathcal{R}_{l}(r_{1,l},r_{2,l},...,r_{K,l})$, based on the definition of $\mathcal{R}_{l}(r_{1,l},r_{2,l},...,r_{K,l})$, we have
\begin{align}
\sum_{i=1}^k{R_{i,l}^{'}}\geq I(Y_{1,l},...,Y_{k,l};U_{1,l},...,U_{k,l}|U_{k+1,l},...,U_{K,l})=\sum_{i=1}^k{R_{i,l}},k=1,2,...,K.
\label{lemma:sum_large_sum}
\end{align}
Based on (\ref{lemma:sum_large_sum}), we have
\begin{align}
&\sum_{k\in\mathcal{K}}{\alpha_k R_{k,l}^{'}}-\sum_{k\in\mathcal{K}}{\alpha_k R_{k,l}}\\
=&\sum_{k\in\mathcal{K}}{\alpha_k(R_{k,l}^{'}-R_{k,l})}\\
=&\sum_{k\in\mathcal{K}}{\left[(\alpha_k-\alpha_{k+1})+(\alpha_{k+1}-\alpha_{k+2})+...+(\alpha_{K-1}-\alpha_{K})+(\alpha_{K})\right](R_{k,l}^{'}-R_{k,l})}\\
=&\sum_{k=1}^{K-1}{(\alpha_k-\alpha_{k+1})(\sum_{i=1}^k{R_{i,l}}-\sum_{i=1}^k{R_{i,l}^{'}})}+\alpha_{K}(\sum_{i\in\mathcal{K}}{R_{i,l}}-\sum_{i\in\mathcal{K}}{R_{i,l}^{'}})\\
\geq& 0\label{equ:not_better}.
\end{align}
The inequality in (\ref{equ:not_better}) holds as $\alpha_1\geq\alpha_2\geq...\geq\alpha_K$ and $\sum_{i=1}^k{R_{i,l}^{'}}\geq\sum_{i=1}^k{R_{i,l}}$ for all $k=1,2,...,K$. This indicates that $(R_{1,l},R_{2,l},...,R_{K,l})$ is the optimal solution in $\mathcal{R}_{l}(r_{1,l},r_{2,l},...,r_{K,l})$ that minimizes $\sum_{k\in\mathcal{K}}{\alpha_k R_{k,l}}$.
The proof of Lemma \ref{lemma:optimal_R_kl} is completed.

\section{Proof of Corollary \ref{corollary:sum_rate_distortion_function}}\label{proof:corollary:sum_rate_distortion_function}
The proof is mainly based on Theorem \ref{theorem:rate_region}.
By the definition in (\ref{definition:sum_rate_distortion_function}), we have
\begin{align}
R_{sum}(D)&\triangleq\min_{(R_1,R_2,...,R_K)\in\mathcal{R}_\star(D)}{\sum_{k\in\mathcal{K}}{R_k}}\\
&=\min_{r_{k}}\min_{R_{k}}{\sum_{k\in\mathcal{K}}{R_{k}}}\\
&=\min_{r_{k}}\sum_{k\in\mathcal{K}}r_{k}+\frac{1}{2}\log(1+\frac{\sigma_{X}^2}{D})\label{corollary:sum_rate_distortion_function:sum_R}\\
&=\frac{K}{2}\log\left(1+\frac{\sigma_{N}^2}{KD-\sigma_{N}^2}\right)+\frac{1}{2}\log(1+\frac{\sigma_{X}^2}{D}),\label{corollary:sum_rate_distortion_function:sum_r}
\end{align}
where (\ref{corollary:sum_rate_distortion_function:sum_R}) is obtained by setting $\mathcal{A}=\mathcal{K}$ in (\ref{region:set_R}), and (\ref{corollary:sum_rate_distortion_function:sum_r}) is obtained by substituting $\sum_{k\in\mathcal{K}}r_{k}$ in (\ref{region:set_r}) into (\ref{corollary:sum_rate_distortion_function:sum_R}).
The proof of Corollary \ref{corollary:sum_rate_distortion_function} is completed.
\end{appendices}

\bibliographystyle{IEEEtran}
\bibliography{IEEEabrv,rate-distortion}
\end{document}